\long\def\comment#1{}
\newfont{\bbb}{msbm10 scaled 700}
\newfont{\bb}{msbm10 scaled 1100}
\newcommand{\CC}{\mbox{\bb C}}
\newcommand{\RR}{\mbox{\bb R}}
\newcommand{\ZZ}{\mbox{\bb Z}}
\newcommand{\EE}{\mbox{\bb E}}
\newcommand{\mbs}[1]{\bm{#1}}
\newcommand{\vect}[1]{{\lowercase{\mbs{#1}}}}
\newcommand{\mat}[1]{{\uppercase{\mbs{#1}}}}
\newcommand{\Pmatrix}[1]{\begin{array}{ll}#1\end{array}}
\newcommand{\T}{{\scriptscriptstyle\mathsf{T}}}
\renewcommand{\H}{{\scriptscriptstyle\mathsf{H}}}
\renewcommand{\Re}[1][]{\ifthenelse{\isempty{#1}}{\operatorname{Re}}{\operatorname{Re}\left(#1\right)}}
\renewcommand{\Im}[1][]{\ifthenelse{\isempty{#1}}{\operatorname{Im}}{\operatorname{Im}\left(#1\right)}}
\newcommand{\av}{\vect{a}}
\newcommand{\fv}{\vect{f}}
\newcommand{\hv}{\vect{h}}
\newcommand{\nv}{\vect{n}}
\newcommand{\pv}{\vect{p}}
\newcommand{\xv}{\vect{x}}
\newcommand{\yv}{\vect{y}}
\newcommand{\zerov}{\vect{0}}
\newcommand{\iotav}{\hbox{\boldmath$\iota$}}
\newcommand{\Lambdam}{\hbox{\boldmath$\Lambda$}}
\newcommand{\Sigmam}{\hbox{\boldmath$\Sigma$}}
\newcommand{\Phim}{\hbox{\boldmath$\Phi$}}
\newcommand{\Psim}{\hbox{\boldmath$\Psi$}}
\newcommand{\Am}{\mat{a}}
\newcommand{\Bm}{\mat{b}}
\newcommand{\Cm}{\mat{c}}
\newcommand{\Em}{\mat{e}}
\newcommand{\Fm}{\mat{f}}
\newcommand{\Pm}{\mat{p}}
\newcommand{\Qm}{\mat{q}}
\newcommand{\Rm}{\mat{r}}
\newcommand{\Sm}{\mat{s}}
\newcommand{\Tm}{\mat{t}}
\newcommand{\Vm}{\mat{V}}
\newcommand{\Wm}{\mat{w}}
\newcommand{\Xm}{\mat{x}}
\newcommand{\Ym}{\mat{y}}
\newcommand{\IM}{\mat{i}}
\newcommand{\Ac}{{\mathcal A}}
\newcommand{\Bc}{{\mathcal B}}
\newcommand{\Ec}{{\mathcal E}}
\newcommand{\Gc}{{\mathcal G}}
\newcommand{\Mc}{{\mathcal M}}
\newcommand{\Nc}{{\mathcal N}}
\newcommand{\Pc}{{\mathcal P}}
\newcommand{\Rc}{{\mathcal R}}
\newcommand{\Sc}{{\mathcal S}}
\newcommand{\Uc}{{\mathcal U}}
\newcommand{\Vc}{{\mathcal V}}
\newcommand{\CN}[1][]{\ifthenelse{\isempty{#1}}{\mathcal{N}_{\mathbb{C}}}{\mathcal{N}_{\mathbb{C}}\left(#1\right)}}
\renewcommand{\P}[1][]{\ifthenelse{\isempty{#1}}{\mathbb{P}}{\mathbb{P}\left(#1\right)}}
\newcommand{\E}[1][]{\ifthenelse{\isempty{#1}}{\mathbb{E}}{\mathbb{E}\left(#1\right)}}
\renewcommand{\det}[1][]{\ifthenelse{\isempty{#1}}{\mathrm{det}}{\mathrm{det}\left(#1\right)}}
\newcommand{\trace}[1][]{\ifthenelse{\isempty{#1}}{\mathrm{tr}}{\mathrm{tr}\left(#1\right)}}
\newcommand{\rank}[1][]{\ifthenelse{\isempty{#1}}{\mathrm{rank}}{\mathrm{rank}\left(#1\right)}}
\newcommand{\diag}[1][]{\ifthenelse{\isempty{#1}}{\mathrm{diag}}{\mathrm{diag}\left(#1\right)}}
\newcommand{\blkdiag}[1][]{\ifthenelse{\isempty{#1}}{\mathrm{blkdiag}}{\mathrm{blkdiag}\left(#1\right)}}
\DeclarePairedDelimiter\abs{\lvert}{\rvert}
\DeclarePairedDelimiter\Abs{\lvert}{\rvert^2}
\renewcommand{\Re}{{\rm Re}}
\renewcommand{\Im}{{\rm Im}}
\newcommand{\herm}{{\sf H}}
\newcommand{\st}{{\rm s.t.}}
\DeclareMathAlphabet{\mathcal}{OMS}{cmsy}{m}{n}
\newcommand{\defeq}{\triangleq}
\newtheorem{remark}{Remark}
\newtheorem{theorem}{Theorem}
\newtheorem{lemma}{Lemma}
\newcommand{\p}{\mathrm{p}}
\renewcommand{\d}{\mathrm{d}}
\renewcommand{\st}{\mathrm{s.t.}}
\begin{document}

\title{Downlink Precoding for DP-UPA FDD Massive MIMO via Multi-Dimensional Active Channel Sparsification}

\author{\IEEEauthorblockN{Han Yu, Xinping Yi, and Giuseppe Caire
}
\thanks{H. Yu and X. Yi are with Department of Electrical Engineering and Electronics, University of Liverpool, L69 3BX, United Kingdom. Email: \{han.yu, xinping.yi\}@liverpool.ac.uk.}
\thanks{G. Caire is with Communications and Information Theory Group (CommIT), Technical University of Berlin, 10587, Berlin, Germany. Email: \{caire\}@tu-berlin.de.}
}

\maketitle

\begin{abstract}
In this paper, we consider user selection and downlink precoding for an over-loaded single-cell massive multiple-input multiple-output (MIMO) system in frequency division duplexing (FDD) mode, where the base station is equipped with a dual-polarized uniform planar array (DP-UPA) and serves a large number of single-antenna users. 
Due to the absence of uplink-downlink channel reciprocity and the high-dimensionality of channel matrices, it is extremely challenging to design downlink precoders using closed-loop channel probing and feedback with limited spectrum resource.
To address these issues, a novel methodology -- active channel sparsification (ACS) -- has been proposed recently in the literature for uniform linear array (ULA) to design sparsifying precoders, which boosts spectral efficiency for multi-user downlink transmission with substantially reduced channel feedback overhead. Pushing forward this line of research, we aim to facilitate the potential deployment of ACS in practical FDD massive MIMO systems, by extending it from ULA to DP-UPA with explicit user selection and making the current ACS implementation simplified.
To this end, by leveraging Toeplitz matrix theory, we start with the spectral properties of channel covariance matrices from the lens of their matrix-valued spectral density function. Inspired by these properties, we extend the original ACS using scale-weight bipartite graph representation to the matrix-weight counterpart. Building upon such matrix-weight bipartite graph representation, we propose a multi-dimensional ACS (MD-ACS) method, which is a generalization of original ACS formulation and is more suitable for DP-UPA antenna configurations. The nonlinear integer program formulation of MD-ACS can be classified as a generalized multi-assignment problem (GMAP), for which we propose a simple yet efficient greedy algorithm to solve it. Simulation results demonstrate the performance improvement of the proposed MD-ACS with greedy algorithm over the state-of-the-art methods based on the QuaDRiGa channel models.
\end{abstract}


\section{Introduction}

Massive multiple-input multiple-output (MIMO) has been demonstrated, in both theory and practice, as one of the major performance boosters for the next generation (5G and beyond) wireless communication systems \cite{larsson2014massive,bjornson2017massive}. Operating massive MIMO in time division duplex (TDD) mode is technically favorable, because the inherent uplink-downlink channel reciprocity makes it convenient to reconstruct the downlink channel vectors directly from the uplink pilot observations without requiring downlink training. Nevertheless, from the mobile network operators’ standpoint, the frequency division duplex (FDD) mode seems more preferable, as the current wireless systems are mainly operating in FDD mode, for which a lot of resource has been invested (e.g., in acquiring the spectrum), and FDD systems show a much better performance in scenarios with symmetric data traffic and delay-sensitive applications. 

For FDD massive MIMO, however, the uplink-downlink channel reciprocity does not hold in general, due to the large uplink/downlink frequency separation that exceeds the fading coherence bandwidth. 
The base stations (BSs) have to probe the downlink channels via pilot training and ask for channel information feedback from the users. The high-dimensional channel vectors (due to the large number of antennas) incur prohibitively expensive feedback overhead and therefore result in inevitable performance degradation provided the limited channel coherence time and bandwidth.
Recently, a fast-growing number of techniques have been proposed to make FDD as competitive as TDD systems by reducing downlink training and uplink feedback overhead, e.g., joint spatial division and multiplexing (JSDM) \cite{JSDM, JSDM-UG}. 
It has been observed that the high-dimensional channel vectors admit a sparse representation in the angular/beam domain, such that they could be efficiently represented by low-dimensional ones \cite{JSDM, JSDM-UG,rao2014distributed, gao2015spatially}. As such, the pilot dimension of downlink training and the feedback overhead can be substantially reduced.

By exploiting the sparse representation in the beam domain, a number of techniques have been proposed.
First, the compressed sensing (CS)-inspired methods (e.g., \cite{rao2014distributed, gao2015spatially, ding2018dictionary}) exploit this sparse representation to reconstruct the channel vectors at the BS using the compressed downlink pilot signals fed back from users. In particular, it allows each user to obtain the compressive measurements of the probing signals locally, and feed back to the BS, so that the BS can jointly recover the channel vectors using CS techniques \cite{rao2014distributed}. 
{Although effective to some extent, these CS-based techniques rely highly on the knowledge of channel vector sparsity order (i.e., the number of significant elements), and probably fail to reconstruct downlink channels reliably when the devoted pilot dimension is less than the sparsity order. }
Second, channel reconstruction by exploiting the second-order statistics has attracted more and more attention \cite{xie2018channel, miretti2018fddmassive, haghighatshoar2018multi}. Instead of feeding back compressed measurements of pilot signals, these approaches leverage the angular domain reciprocity to reconstruct downlink channel covariance matrices from the uplink training. 
Third, there is a new trend of using deep learning to predict downlink channel from the observations of uplink training \cite{wen2018deep, wang2018deep, jang2019deep, arnold2019enabling}. The basic idea is to build up a mapping from uplink channel vectors to downlink ones by using an over-parameterized deep neural network. In principle, the deep neural networks with a sufficiently large number of parameters are able to approximate any complicated functions, as long as the training dataset is large enough. Nevertheless, there are still many challenges to design an efficient deep neural network for channel reconstruction.

More recently, channel reconstruction methodologies using second-order channel statistics have been advanced by e.g, \cite{B.K, khalilsarai2020dual, khalilsarai2020structured, liu2020statistical}, which aim to reconstruct downlink channel for FDD massive MIMO by exploiting the  angular scattering function (ASF) reciprocity. This technique relies on the key assumption of the reciprocity of the ASF -- it assumes that the ASF is frequency-invariant over both uplink and downlink frequency bands \cite{haghighatshoar2018multi}. It consists of two major components: (1) Acquiring downlink channel support information in the angular domain from uplink channel training by exploiting uplink/downlink angular domain reciprocity; and (2) exploiting structural properties of such support information to design efficient downlink probing and uplink feedback schemes.
As the channel angular support (i.e., non-zero elements) information is contained in the covariance matrix, its acquisition can be done by estimating downlink covariance matrices from the uplink ones, followed by the angular supports extraction. The channel support information of all users establishes a beam-user association (that can be modeled by a bipartite graph), in which the support of a user’s channel vector indicates the corresponding beams that can be utilized to serve this user. Such a beam-user association can be exploited for intelligent beam-user assignment that leads to artificially sparsified users’ channels. The active channel sparsification (ACS) will finally help reduce the pilot dimension for downlink probing, while allowing for simultaneous multiple-access of a large number of users using spatial multiplexing. 

However, the ACS methodology is still facing some challenges in the potential deployment in the practical massive MIMO systems. On one hand, dual-polarized uniform planar array (DP-UPA) is commonly used in the practical systems, although attempts have been made to extend from ULA to DP-ULA \cite{khalilsarai2020dual}. On the other hand, although the current ACS implementation using the mixed integer linear program (MILP) formulation is elegant in theory, its computational complexity scales as the number of antennas and users. To address these issues, in this paper, we consider to extend the ACS formulation from ULA/DP-ULA to DP-UPA by leveraging a matrix-weight bipartite graph representation for users' channels.  
By relaxing the original MILP formulation, we come up with a new nonlinear integer program (NIP) formulation. Hence, we propose a greedy algorithm to solve the NIP problem in an efficient way. 
Specifically, our contributions are summarized as follows.
\begin{itemize}
\item The channel covariance matrix of massive MIMO with DP-UPA antennas can be recognized as a doubly block Toeplitz matrix. By leveraging Toeplitz matrix theory, we characterize the spectral properties of channel covariance matrices by investigating their matrix-valued spectral density function, which is also referred to as angular scattering function \cite{B.T}. There exhibits some sparsity in the spectral density function when the angular spread is narrow under the context of DP-UPA massive MIMO scenarios.
\item Inspired by these properties, we extend channel representation of ACS using bipartite graph from the original scale-weight to the matrix-weight counterpart. The matrix-weight bipartite graph establishes the association between block beams (correspond to dual-polarized antenna) and users according to the asymptotic block diagonalization of the channel covariance matrices. Building upon the matrix-weight bipartite graph representation, we propose a multi-dimensional ACS (MD-ACS) method, which is a generalized version of original ACS formulation and is more suitable for DP-UPA antenna configurations. The MD-ACS can be formulated as a generalized multi-assignment problem, which includes the original ACS formulation (i.e., assignment problem) as a special case.
\item By taking into account the sum rate maximization and multiuser interference control, we reformulate the MD-ACS approach as a nonlinear integer program, for which we propose a simple yet efficient greedy algorithm to solve it. The extensive simulation results using QuaDRiGa channel models demonstrate the superiority of the proposed MD-ACS with greedy algorithm to the state-of-the-art methods, including the recently advanced ACS method concerning DP-ULA antenna configurations.
\end{itemize}

The rest of this paper is organized as follows. In the next section, we describe the channel and system model of the DP-UPA FDD massive MIMO system with downlink training and precoding. In Section III, we study channel covariance matrices through Toeplitz theory, and characterize the spectral properties of the spectral density functions. The proposed MD-ACS is detailed in Section IV, including the review of the original ACS, the matrix-weight graph representation, and the NIP formulation with a greedy algorithm. The numerical results can be found in Section V, followed by the Conclusion in Section VI.

{\bf Notation:} 
We use $x$, $\xv$, and $\Xm$ to represent scalar, vector, and matrix, respectively. 
For any scalar $x$, we denote $\{x_n\}_{n=1}^N \defeq \{x_1,x_2,\dots, x_N\}$.  
For the integer $N$, we denote $[N]\defeq \{1,2,\dots,N\}$. A matrix $\Xm$ is Hermitian if and only if $\Xm=\Xm^{\H}$, where $\Xm^\H$ is the conjugate transpose of $\Xm$. $\trace{(\Xm)}$ denotes the trace of a matrix $\Xm$. 
$\EE\{\cdot\}$ denotes the expectation. The Kronecker and Hadamard products of two matrices $\Xm$ and $\Ym$ are denoted by $\Xm \otimes \Ym$ and $\Xm \odot \Ym$, respectively.
{$\CN(\alpha,\beta)$ denotes the complex normal distribution, where $\alpha$ and $\beta$ are mean (vector) and variance (matrix), respectively. }
$\IM_M$ is the $M\times M$ identity matrix, and $\Fm_M$ is the discrete Fourier transform (DFT) matrices with $[\Fm_{M}]_{p,q}=\frac{1}{\sqrt{M}}e^{-\jmath\frac{2\pi (p-1)(q-1)}{M}}$ for all $p\in [M],q\in[M]$.

\section{Channel and Signal Model}  
\subsection{DP-UPA Channel Model}
We consider a single-cell massive MIMO system where the base station is equipped with an  $M_x\times M_y\times 2$ dual-polarized uniform planar array (DP-UPA) serving $N_U$ single-polarized single-antenna users. {The DP-UPA consists of in total $M=2 M_x M_y$ antenna elements with $M_x$ ports in each column and $M_y$ ports in each row, and for each port there are two polarized antenna elements.} According to 3GPP {TR-36.873} \cite{3GPP}, which is also referred by e.g., \cite{C.Q} and \cite{L.M}, the channel vector $\hv$ of DP-UPA can be represented as
\begin{align}\label{eq:channelV_channelH}
\hv=\begin{bmatrix}\hv_V\\ \hv_H \end{bmatrix} \in \CC^{M \times 1}
\end{align}
where $\hv_V \in \CC^{\frac{M}{2} \times 1}$ and $\hv_H \in \CC^{\frac{M}{2} \times 1}$ correspond to the channel between the vertical ($V$)/horizontal ($H$) antenna and the user, respectively. For notational simplicity, let $q\in\{V,H\}$. Given the angle intervals of azimuth $\Ac$ and elevation $\Bc$, according to the channel model of 3GPP \cite{3GPP}, the $q$-th sub-channel vector can be written as 
\begin{align}
\hv_q=\int_\Bc\int_\Ac \beta_q(\theta, \phi)\gamma_q\av(\theta,\phi)d\theta d\phi
\end{align}
where $\Ac=[\theta_{\min},\theta_{\max}],\Bc=[\phi_{\min},\phi_{\max}]$ and $|\Ac|=2\delta_\theta$ and $|\Bc|=2\delta_\phi$, in which $\delta_\theta$ and $\delta_\phi$ are the angular spread (AS) of azimuth and elevation, respectively; $\beta_q(\theta,\phi)\sim\CN(0,\beta_q)$ denotes the complex gain that is independent and identically distributed (i.i.d.) across paths; $\gamma_q$ is the polarization factor of the $q$-th sub-channel; and    
$\av(\theta,\phi)$ is the steering vector of DP-UPA antenna that possesses the same structure as that of UPA, and it can be written as \cite{3GPP}\cite{C.Q}\cite{lu2020omnidirectional}
\begin{align} \label{eq:steering-vector}
 &\av(\theta,\phi)=\av_y(\theta,\phi)\otimes \av_x(\theta,\phi)
=
\begin{bmatrix}
1\\e^{\jmath\frac{2\pi d_y}{\lambda_w}\sin(\phi)\sin(\theta)}\\ \vdots \\e^{\jmath\frac{2\pi d_y(M_y-1)}{\lambda_w}\sin(\phi)\sin(\theta)}
\end{bmatrix}\otimes
\begin{bmatrix}
1\\e^{\jmath\frac{2\pi d_x }{\lambda_w}\sin(\phi)\cos(\theta)}\\ \vdots \\e^{\jmath\frac{2\pi d_x (M_x-1)}{\lambda_w}\sin(\phi)\cos(\theta)}
\end{bmatrix}
\end{align}
where $d_x$ and $d_y$ are antenna spacing of column and row array respectively, and $\lambda_w$ is the carrier wavelength.
\subsection{Downlink Training and Precoding}
\label{sec:signal-model}
In this paper, we follow the comprehensive framework proposed in \cite[Figure 4]{khalilsarai2020dual}, which consists of (1) uplink pilot transmission, (2) uplink covariance estimation, (3) uplink-downlink covariance transformation, (4) downlink pilot transmission, (5) feeding back pilot measurements, (6) downlink channel estimation, and (7) downlink beamforming.
As our focus in this paper is on the downlink precoding/beamforming, we assume the availability of downlink covariance matrix at the base station via the above steps (1)-(3). In what follows, we briefly reiterate the procedure of (4)-(7) to maintain certain level of self-containedness.

\subsubsection{Downlink Pilot Transmission}
As in \cite{khalilsarai2020dual}, the base station sends a space-time pilot matrix $\Sm\in\CC^{T\times M'}$ to all users through a sparsifying precoder $\Vm_h \in \CC^{M \times M'}$, where $T$ is the number of time slots used for pilot trainsmission, {$M' \le M$ is the dimension after the active sparsification, and the columns of $\Vm_h$ are chosen from an orthogonal matrix that will be specified later.}
As such, the received pilot signal $\yv_i^{\p}$ of $i$-th user  can be written as 
\begin{align}\label{eq:estimated channel}
&\yv^{\p}_i = \Sm \Vm_h^\H \hv_i +\nv,
\end{align}
where $\hv_i\in\CC^{M\times 1}$ is the downlink channel vector of $i$-th user, and $\nv \sim \CN(\zerov, \sigma^2 \IM_M)$ is the additive white Gaussian noise (AWGN). The pilot matrix $\Sm$ is up to design, subject to a total power constraint $\trace(\Sm\Vm_h^\H\Vm_h\Sm^\H)\leq \rho^{\p}T$, where {$\rho^{\p}$ is the pilot signal power in each time slot.} 

\subsubsection{Feeding Back Pilot Measurements} For simplicity, we assume the users feed back their pilot signals $\yv^{\p}_i \in \CC^{T \times 1}$ to the base station in an analog form. The digital feedback with quantization can be implemented according to well-developed techniques (see \cite{love2003equal} and references therein). Due to possible user selection, only the selected users are required to send the pilot signals back to the base station. In doing so, the base station could successfully acquires the perfect pilot signals $\{\yv^{\p}_i \}_{i \in \Sc}$ with {$\Sc$ being the subset of selected users, which will be specified later.}

\subsubsection{Downlink Channel Estimation}
Given the $T \times 1$ pilot signal $\yv^{\p}_i$, we aim to recover the $M \times 1$ channel vector $\hv_i$ with $M > T$, relying on the sparsity of $\hv_i$ in the angular domain.
Following the footstep in \cite{khalilsarai2020dual}, we obtain the estimated channel vector via MMSE estmators as
\begin{align}\label{est-channel}
\hat{\hv}_i = \Rm_{h,i}\Rm_{y,i}^{-1}\yv_i^{\p},
\end{align}
where $\Rm_{h,i} = \EE\{\hv_i(\yv_i^\p)^\H\}= \Rm_i \Vm_h\Sm^\H$, $\Rm_{y,i} = \EE\{\yv_i^\p(\yv_i^\p)^\H\}=\Sm\Vm_h^\H\Rm_i\Vm_h\Sm^\H + \sigma^2 \IM$ with $\Rm_i \defeq \EE [\hv_i \hv_i^{\H}]$ being the downlink channel covariance matrix of user-$i$. 


\subsubsection{Downlink Precoding} With channel estimates, the base station transmit users' data $\{d_i\}_{i \in \Sc}$ through sparsifying precoders $\pv_i \in \CC^{M \times 1}$ for each selected user $i \in \Sc$.
Thus, the received signal under FDD DP-UPA downlink data phase $y_i^\d$ of $i$-th user can be written as 
\begin{align}
y_i^{\d}=\hv_i^\H\pv_id_i+\sum_{j \in \Sc \backslash i}\hv_i^\H\pv_jd_j+n_i
\end{align}
where $n_i\sim\CN(0,\sigma_i^2)$ is the AWGN, and the sparsifying precoder $\pv_i$ will be specified later. As the downlink covariance matrix estimation has been extensively investigated in the literature (e.g, \cite{L.M,khalilsarai2020dual,miretti2018fddmassive}), we place our focus instead on designing the downlink precoder assuming that the downlink channel covariance matrix $\{\Rm_i\}_{i=1}^{N_U}$ is perfectly known at the base station.

\section{Spectral Properties of Covariance Matrix}\label{sec-3}
While some existing works have mentioned the Toeplitz structure of covariance matrices for ULA/UPA massive MIMO (e.g., \cite{JSDM, Yu:ACS}), the extension to DP-UPA has not been fully understood. In what follows, we will inspect the structural properties of downlink channel covariance matrices $\{\Rm_i\}_{i=1}^{N_U}$ for DP-UPA massive MIMO through the lens of Toeplitz matrix theory.

\subsection{Toeplitz Matrix Theory}

Before proceeding further, we first introduce the definitions related to Toeplitz matrix \cite{gray2006toeplitz} and its extension to block Toeplitz matrix \cite{B.T} and doubly Toeplitz matrix \cite{pa1996atheorem,oudin2008asymptotic}.

Given a sequence of scalars 
$\{t_{-n+1}, \dots, t_{-1}, t_0, t_1 \dots, t_{n-1}\}$, an $n\times n$ matrix $\Tm_n$ is a Toeplitz matrix if $[\Tm_n]_{i,j}=t_{i-j}$ for all $i,j\in[n]$. Similarly, given a sequence of $M_1 \times M_2$ matrices 
$\{\Tm_{-n+1}, \dots, \Tm_{-1}, \Tm_0, \Tm_1 \dots, \Tm_{n-1}\}$
, an $nM_1\times nM_2$ matrix $\Bm_n$ is a block Toeplitz matrix if the $(i,j)$-th $M_1 \times M_2$ submatrix $[\Bm_n]_{i,j}=\Tm_{i-j}$ for all $i,j\in[n]$. In particular, if $\Tm_{m}$ is an $N \times N$ Toeplitz matrix for all $-n+1 \le m \le n-1$, then $\Bm_n$ is an $nN \times nN$ doubly Toeplitz matrix, also known as Toeplitz-block-Toeplitz (TBT) matrix (i.e., block Toeplitz matrix with Toeplitz blocks). Further, if $\Tm_{m}$ is an $NM_1 \times NM_2$ block Toeplitz matrix for all $-n+1 \le m \le n-1$, then $\Bm_n$ becomes an $nNM_1 \times nNM_2$ doubly block Toeplitz matrix. It can be viewed as a Toeplitz matrix with each element being block Toeplitz matrices, or a doubly Toeplitz matrix with each element being a general matrix. Throughout this paper, we consider Hermitian matrix, that is $t_{-i}=t_{i}^*$ for Toeplitz matrix and $\Tm_{-i} = \Tm_{i}^\H$ for block Toeplitz matrix.

The circulant, block circulant, doubly circulant, doubly block circulant matrices can be similarly defined as their Teoplitz counterparts, where the only difference is the circular operation using {the modulo operator}$\!\mod\!\!$, i.e., for the scalar sequence $[\Cm_n]_{i,j}=c_{(i-j) \!\mod n}$ for all $i,j\in[n]$ and for the matrix sequence $[\Bm_n]_{i,j}=\Cm_{(i-j) \!\mod n}$ for all $i,j\in[n]$. Apparently, the (doubly block) circulant matrix is the special case of (doubly block) Toeplitz matrix. Given an $n \times n$ circulant matrix $\Cm_n$, it can be diagonalized by DFT matrix, i.e., $\Cm_n = \Fm_n \Lambdam \Fm_n^\H$ with $\Lambdam$ being a diagonal matrix. For an $nN \times nN$ doubly circulant matrix, it can be diagonalized by 2D-DFT matrix $\Fm_n \otimes \Fm_N$. For an $nM_1 \times nM_2$ block circulant matrix $\Bm_n$, it can be block-diagonalized by $\Bm_n = (\Fm_n \otimes \IM_{M_1}) \Sigmam (\Fm_n \otimes \IM_{M_2})^\H$ where $\Sigmam$ is an $nM_1 \times nM_2$ block diagonal matrix, with each block being an $M_1 \times M_2$ matrix.

When $n$ tends to infinity, each Toeplitz matrix can be associated with a generating function, which is a continuous and periodic function \cite{gray2006toeplitz,B.T,pa1996atheorem,oudin2008asymptotic}. 
For instance, the Hermitian Toeplitz matrix $\Tm_n$ can be generated by a real function $F: [-1/2, 1/2] \mapsto \RR$, i.e.,
$F(\omega)=\sum_{k=-\infty}^{\infty}t_ke^{\jmath 2\pi k \omega}$. Similarly, the Hermitian block Toeplitz matrix $\Bm_n$ can be generated by a matrix-valued real function $F: [-1/2,1/2] \mapsto \RR^{M_1 \times M_2}$, i.e., $F(\omega)=\sum_{k_1=-\infty}^{\infty}\Tm_{k_1}e^{\jmath 2 \pi k_1\omega}$.
Further, for the Hermitian doubly block Toeplitz matrix, it can be generated by  $F: [-1/2,1/2]^2 \mapsto \RR^{M_1 \times M_2}$, i.e., 
\begin{align}\label{eq:TBT_G}
&F(\omega_1,\omega_2)=\sum_{k_1=-\infty}^{\infty}\sum_{k_2=-\infty}^{\infty}\Tm_{k_1,k_2}e^{\jmath 2 \pi k_1\omega_1}e^{\jmath 2 \pi k_2\omega_2}.
\end{align}
The circulant conterparts share the same generating functions.

\subsection{Spectral Properties}
By leveraging the Toeplitz matrix theory, we inspect the spectral properties of channel covariance matrix through a function analysis perspective. In particular, instead of looking into the channel covariance matrix, we investigate its spectral density in the angular domain. This is underpinned by the following lemma.
\begin{lemma}
\label{lemma:structure_cov}
The channel covariance matrix  $\Rm$ of DP-UPA massive MIMO can be represented, subject to row/column permutation, as a Hermitian doubly block Toeplitz matrix $\hat{\Rm}$, which can be asymptotically block diagonalized by an orthongal matrix
\begin{align}
\Vm = \Fm_{M_x} \otimes \Fm_{M_y} \otimes \IM_2,
\end{align}
as $M_x,M_y\to \infty$, where $\Fm_{n}$ is an $n \times n$ DFT matrix, and the block-diagonal submatrices are uniformly sampled from the matrix-valued spectral density function, i.e.,
\begin{align} 
&\Sigmam(\omega_1,\omega_2)=\sum_{m_1=-M_y+1}^{M_y-1}\sum_{m_2=-M_x+1}^{M_x-1}[\hat{\Rm}]_{m_1,m_2}e^{ \jmath 2 \pi (m_1\omega_1+m_2\omega_2)}
\end{align}
with $[\hat{\Rm}]_{m_1,m_2}$ being a $2 \times 2$ submatrix of $\hat{\Rm}$.
\end{lemma}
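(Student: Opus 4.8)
The plan is to reduce everything to the asymptotic equivalence between multilevel block Toeplitz and block circulant matrices, proceeding in three stages. \textbf{First}, I would compute $\Rm=\EE[\hv\hv^\H]$ blockwise in the two polarizations. Writing $\Rm_{qq'}=\EE[\hv_q\hv_{q'}^\H]$ for $q,q'\in\{V,H\}$ and substituting the integral representation of $\hv_q$, the independence of the path gains $\beta_q(\theta,\phi)$ across angles collapses the double integral into a single angular integral of $\av(\theta,\phi)\av(\theta,\phi)^\H$ weighted by a scalar density $\Gamma_{qq'}(\theta,\phi)$. The key structural fact is the Kronecker factorization $\av=\av_y\otimes\av_x$, which gives $\av\av^\H=(\av_y\av_y^\H)\otimes(\av_x\av_x^\H)$. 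Since the $(p,p')$ entry of $\av_y\av_y^\H$ equals $e^{\jmath\frac{2\pi d_y(p-p')}{\lambda_w}\sin\phi\sin\theta}$ and similarly for $\av_x\av_x^\H$, the $((p,r),(p',r'))$ entry of each block depends only on the differences $(p-p',r-r')$; hence every block $\Rm_{qq'}$ is a doubly Toeplitz matrix, and integrating against $\Gamma_{qq'}$ preserves this.

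\textbf{Next}, I would apply the row/column permutation that interleaves the two polarizations, collecting at each spatial index the corresponding pair of $V$/$H$ entries, with the spatial indices ordered compatibly with the Kronecker factors of $\Vm$. Under this permutation $\Rm\mapsto\hat\Rm$, the $((p,r),(p',r'))$ block of $\hat\Rm$ is the $2\times2$ matrix with entries $[\Rm_{qq'}]_{(p,r),(p',r')}$; because each $\Rm_{qq'}$ is doubly Toeplitz, this $2\times2$ block depends only on $(p-p',r-r')$, so $\hat\Rm$ is a doubly block Toeplitz matrix with $2\times2$ innermost blocks $[\hat\Rm]_{m_1,m_2}$. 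Hermitian symmetry of $\Rm$ carries over, giving $[\hat\Rm]_{-m_1,-m_2}=[\hat\Rm]_{m_1,m_2}^\H$, so the generating function $\Sigmam(\omega_1,\omega_2)$ in the statement is Hermitian-valued.

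\textbf{Finally}, I would invoke the multilevel block extension of Gray's theorem: $\hat\Rm$ is asymptotically equivalent to a doubly block circulant matrix $\Cm$, meaning both have uniformly bounded spectral norm and $\frac{1}{2M_xM_y}\Normf{\hat\Rm-\Cm}\to0$ as $M_x,M_y\to\infty$. This equivalence hinges on the continuity and boundedness of $\Sigmam$, which follow from absolute summability of the coefficients $[\hat\Rm]_{m_1,m_2}$---itself a consequence of the bounded angular support $\Ac\times\Bc$ and bounded scattering density. A doubly block circulant matrix with $2\times2$ blocks is block-diagonalized exactly by $\Vm=\Fm_{M_x}\otimes\Fm_{M_y}\otimes\IM_2$, its diagonal blocks being $\Sigmam$ evaluated at the DFT frequencies $(\omega_1,\omega_2)=(k/M_y,\ell/M_x)$. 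Asymptotic equivalence then transfers the block diagonalization to $\hat\Rm$, yielding the claim.

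The hard part will be the last step: making rigorous the sense in which the single unitary $\Vm$ \emph{simultaneously} block-diagonalizes both Toeplitz levels while leaving the $2\times2$ polarization block untouched. Because two levels of Toeplitz structure interact with a non-commuting inner block, the elementary scalar circulant-embedding argument does not suffice, and one must appeal to the Tilli/Oudin--Delmas asymptotic theory for multilevel block Toeplitz sequences. Verifying its regularity hypotheses for the ASF-generated $\Sigmam$, and pinning down the precise norm in which $\Vm^\H\hat\Rm\Vm\to\blkdiag(\Sigmam)$ holds, is where the genuine technical content resides.
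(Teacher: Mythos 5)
Your plan follows essentially the same route as the paper's own proof: blockwise computation of $\Rm$ using the steering-vector Kronecker structure $\av=\av_y\otimes\av_x$ to show each polarization block is doubly Toeplitz, an interleaving (perfect-shuffle) permutation to obtain the doubly block Toeplitz $\hat{\Rm}$ with $2\times 2$ innermost blocks, and then the asymptotic Toeplitz--circulant equivalence together with the exact block diagonalization of doubly block circulant matrices by $\Fm_{M_x}\otimes\Fm_{M_y}\otimes\IM_2$ and uniform sampling of the generating function. The only real difference is cosmetic: the paper exploits the model-specific factorization $\Rm=\Pm_{\beta}\otimes\Bm$ (angle-independent polarization weights), so the shuffle yields $\hat{\Rm}=\Bm\otimes\Pm_{\beta}$ in one step, whereas you allow general angle-dependent cross-polarization densities $\Gamma_{qq'}$; and at the final step both your sketch and the paper delegate the regularity/convergence details to the cited multilevel Toeplitz asymptotics rather than verifying them (your claim that absolute summability of the coefficients follows from compact angular support is not actually needed, and is the one assertion you should not rely on, since compactly supported bounded densities need not have absolutely summable Fourier coefficients).
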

\begin{proof}
See Appendix \ref{proof:lemma-toeplitz}.
\end{proof}

\begin{remark}
The $2 \times 2$ matrix-valued spectral density function $\Sigmam(\omega_1,\omega_2)$ over $(\omega_1,\omega_2) \in [-1/2,1/2]^2$ is the generating function of the doubly block Toeplitz matrix $\hat{\Rm}$. As row/column permutation does not change spectral properties, $\Sigmam(\omega_1,\omega_2)$ is the spectral density function of channel covariance matrix $\Rm$ over the two-dimensional angular domain $ [-1/2,1/2]^2$. Similar to the ULA massive MIMO, we can also transform the signals from spatial to angular domain to exploit possible (block) sparsity of the spectral density.
The columns of the DFT-type orthogonal matrices have been widely used as the common basis for Toeplitz, block Toeplitz and TBT matrix in massive MIMO such as precoder design for ULA\cite{X.R,khalilsarai2020dual} and UPA\cite{Yu:ACS} array, and pilot decontamination\cite{Haifan,you2015pilot,Z.C}. 
\end{remark}

Equipped with Lemma \ref{lemma:structure_cov}, we are able to inspect the spectra of channel covariance matrix $\Rm$ through its spectral density function $\Sigmam(\omega_1,\omega_2)$. As such, we come up with the sparsity properties of DP-UPA antennas in the angular domain, as in Theorem \ref{theorem:sparsity}.

\begin{theorem}
\label{theorem:sparsity}
The spectral density function $\Sigmam(\omega_1,\omega_2)$ has a compact support over the two-dimensional frequencies  $(\omega_1,\omega_2) \in [-1/2,1/2]^2$, i.e.,
\begin{align}
\Sigmam(\omega_1,\omega_2) = \mathbf{0}, \quad \mathrm{ if~} (\omega_1, \omega_2) \notin \left[-\frac{d}{\lambda_w}z_1^{\max},\frac{d}{\lambda_w}z_1^{\min}\right] \times \left[-\frac{d}{\lambda_w}z_2^{\max},\frac{d}{\lambda_w}z_2^{\min}\right]
\end{align}
where $z_i^{\min}$ and $z_i^{\max}$ depend on  a fixed AOA $\theta_c,\phi_c$ and AS $\Delta_1, \Delta_2$.
\end{theorem}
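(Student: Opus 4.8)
The plan is to make the spectral density function $\Sigmam(\omega_1,\omega_2)$ explicit as an angular integral and then read off its support directly. First I would substitute the channel model into the covariance: using $\hv_q=\int_\Bc\int_\Ac \beta_q(\theta,\phi)\gamma_q\av(\theta,\phi)\,d\theta\,d\phi$ together with the fact that the path gains are independent across $(\theta,\phi)$ with variance $\beta_q$, each $2\times2$ block $[\hat{\Rm}]_{m_1,m_2}$ of the permuted covariance reduces to an integral of the form $\int_\Bc\int_\Ac g(\theta,\phi)\,e^{\jmath\frac{2\pi d}{\lambda_w}(m_1 z_1+m_2 z_2)}\,d\theta\,d\phi$, where $z_1\defeq\sin\phi\sin\theta$ and $z_2\defeq\sin\phi\cos\theta$ are the two direction cosines read off from the Kronecker steering vector $\av_y\otimes\av_x$ in \eqref{eq:steering-vector}, and $g$ collects the polarization factors. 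The exponential kernel is common to all four entries of the block (only $g$ changes with polarization), so all entries of $\Sigmam$ will inherit the same support. The key point is that the block depends on the index pair $(m_1,m_2)$ only through this phase, which is precisely the doubly-Toeplitz structure guaranteed by Lemma~\ref{lemma:structure_cov}.

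Next I would insert this expression into the definition of $\Sigmam(\omega_1,\omega_2)$ from Lemma~\ref{lemma:structure_cov}, interchange the finite sums over $m_1,m_2$ with the angular integral, and let $M_x,M_y\to\infty$. The two geometric sums $\sum_{m_1}e^{\jmath 2\pi m_1(\omega_1+\frac{d}{\lambda_w}z_1)}$ and $\sum_{m_2}e^{\jmath 2\pi m_2(\omega_2+\frac{d}{\lambda_w}z_2)}$ are Dirichlet kernels whose mass concentrates, as the number of terms grows, where their arguments vanish modulo one; on the fundamental domain $[-1/2,1/2]^2$ they converge to Dirac deltas, giving
\[
\Sigmam(\omega_1,\omega_2)\;\propto\;\int_\Bc\int_\Ac g(\theta,\phi)\,\delta\!\Big(\omega_1+\tfrac{d}{\lambda_w}z_1\Big)\,\delta\!\Big(\omega_2+\tfrac{d}{\lambda_w}z_2\Big)\,d\theta\,d\phi .
\]
Hence $\Sigmam(\omega_1,\omega_2)$ can be nonzero only if the frequency pair lies in the image of the rectangle $\Ac\times\Bc$ under the map $(\theta,\phi)\mapsto(-\frac{d}{\lambda_w}z_1,-\frac{d}{\lambda_w}z_2)$.

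Finally I would bound that image. Since $z_1=\sin\phi\sin\theta$ and $z_2=\sin\phi\cos\theta$ are smooth on the compact rectangle $\Ac\times\Bc=[\theta_{\min},\theta_{\max}]\times[\phi_{\min},\phi_{\max}]$, they attain finite extrema $z_i^{\min},z_i^{\max}$; writing the rectangle through a center AOA $(\theta_c,\phi_c)$ and the azimuth/elevation spreads $\Delta_1,\Delta_2$ expresses these extrema in the stated form. The map is essentially polar ($z_2+\jmath z_1$ has modulus $\sin\phi$ and argument $\theta$), so its image is an annular sector; enclosing it in its coordinate bounding box yields a rectangular support of the form claimed in the theorem.

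The main obstacle I expect is the limiting argument in the second step: making the passage from the finite Dirichlet kernels to Dirac deltas rigorous, which is really the coordinatewise content of the asymptotic block-diagonalization in Lemma~\ref{lemma:structure_cov}, together with the bookkeeping in the third step of extremizing the two trigonometric direction cosines over the rectangle. The latter requires checking interior critical points as well as the four edges, and then reconciling the resulting signs with the particular ordering of the endpoints $-\frac{d}{\lambda_w}z_i^{\max}$ and $\frac{d}{\lambda_w}z_i^{\min}$ appearing in the stated box.
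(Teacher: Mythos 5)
Your proposal is correct and follows essentially the same route as the paper: both substitute the explicit angular-integral form of the blocks $[\hat{\Rm}]_{m_1,m_2}$ into the generating function, exchange the sum with the integral, invoke the Dirac-comb identity (the paper cites the Poisson summation formula where you take a Dirichlet-kernel limit, which is the same identity), and then read off the support as the image of the angular domain under the direction-cosine map $(\theta,\phi)\mapsto(\sin\phi\sin\theta,\sin\phi\cos\theta)$ bounded by the extrema $z_i^{\min}, z_i^{\max}$. The sign bookkeeping you flag at the end is a real issue, and the paper glosses over it too: its own derivation yields $\omega_i=-\frac{d}{\lambda_w}z_i$ for the $m_i=0$ term, so the support box it writes down inherits exactly the endpoint-ordering discrepancy you anticipated.
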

\begin{proof}
See Appendix \ref{proof:theorem-sparsity}.
\end{proof}
\begin{remark}
Theorem \ref{theorem:sparsity} is a generalization of the compact properties of spectral density function from ULA reported in \cite{JSDM} to DP-UPA antenna configurations.
In contrast with the ULA, the spectral density function of DP-UPA is $2\times 2$ matrix-valued because of the dual-polarization. In addition,
for UPA and DP-UPA antennas, the compact supports of $\Sigmam(\omega_1,\omega_2)$ could be more dispersed, thanks to the two-dimensional array. This enables UPA-type antennas to server more users without causing severe pilot contamination or multiuser interference.

Thanks to the high resolution of large-scale antenna arrays, the azimuth and elevation AoAs are usually limited within a narrow range \cite{jaeckel2014quadriga}, so that $z_i^{\max}$ and $z_i^{\min}$ are confined within small intervals in $[-1,1]$. As such, the compact support only covers a limited range of frequency range, and thus the spectral density exhibits sparsity properties in the angular domain. 
To illustrate the above points, we plot the spectral density of covariance matrices of ULA, UPA, and DP-UPA with the same number of antennas, using channels generated by QuaDRiGa \cite{jaeckel2014quadriga} (See Section \ref{sec:numerical} for the configurations). In particular,
Figure \ref{Com_set_1} shows the normalized diagonal elements of DFT-diagonalized covariance matrices for $128\times 1$ ULA, $16\times 8$ UPA, and $8\times 8\times 2$ DP-UPA, respectively. 
It can be observed that ULA has one single yet wide support, and UPA and DP-UPA have multiple narrow supports. Additionally, for DP-UPA, it exhibits the block support where the supports appear in pair, which agrees with the $2 \times 2$ matrix-value spectral density function.
\end{remark}

\begin{figure}[t]
\centering
\includegraphics[width= 3.5in,angle=0]{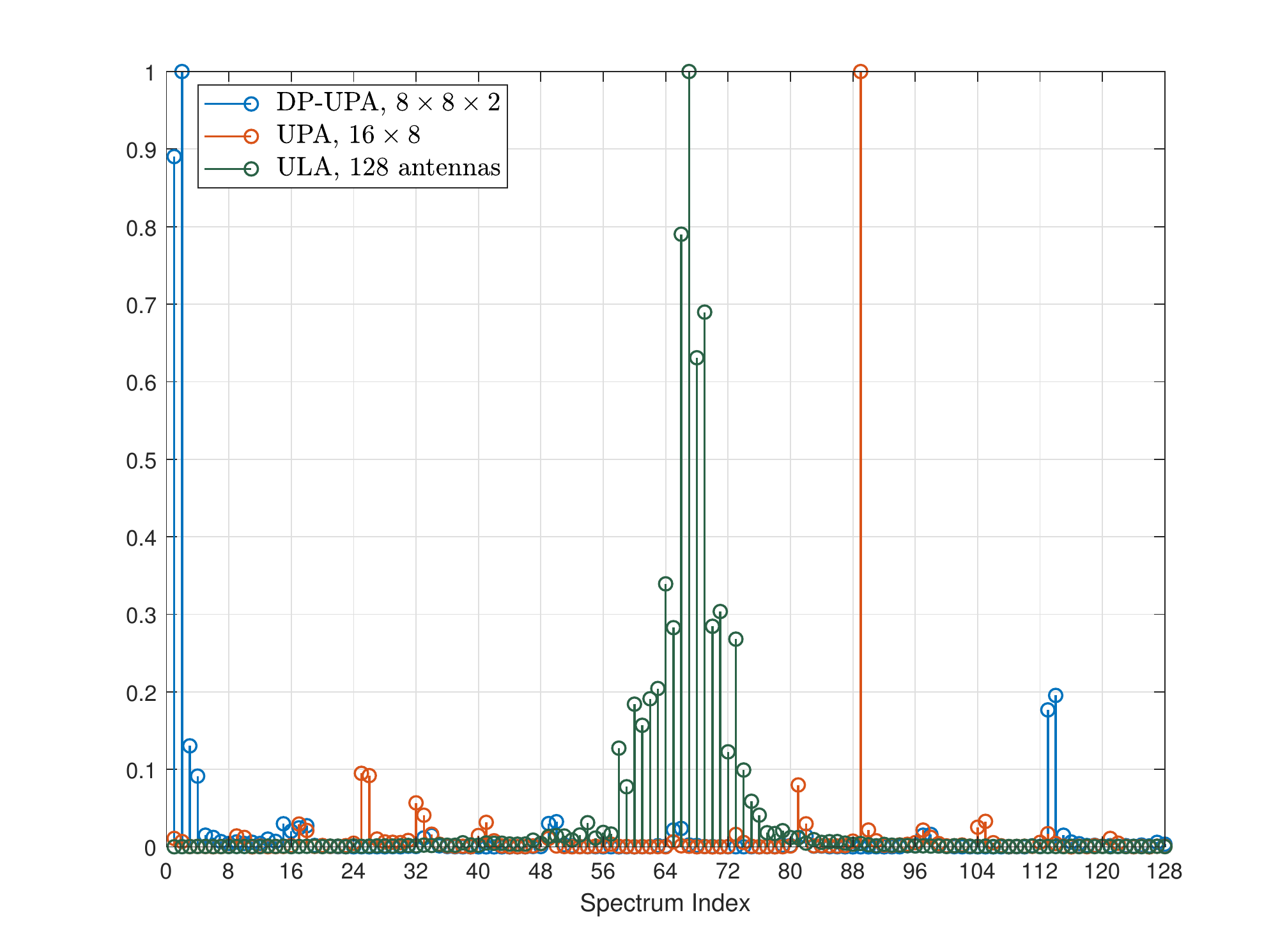}
\vspace{-10pt}
\caption{The normalized spectra of covariance matrices for $128 $ ULA, $16\times 8$ UPA, and $8\times 8\times2$ DP-UPA antennas.}
\label{Com_set_1}
\vspace{-20pt}
\end{figure}

\section{Multi-Dimensional Active Channel Sparsification}
\label{sectionIV}

Inspired by the matrix-valued spectral density function, we extend the active channel sparsification to multi-dimensional scenarios, with a generalized optimization problem formulation. In what follow, we first summarize the merit of active channel sparsification proposed in \cite{B.K}, followed by a matrix-weight graph representation, and finally we bridge the general optimization problem formulation to an existing problem in combinatorial optimization.

\subsection{Preliminary: Active Channel Sparsification}
For the sake of self-containedness, we briefly introduce the main idea of active channel sparsification proposed in \cite{B.K}, with the focus on ULA antenna geometry.
\subsubsection{Channel Representation} Each user's channel is represented as a weighted sum of a set of vectors chosen from common bases. These common basis vectors are also referred as to virtual beams in the angular/beam domain. Usually, for ULA antenna setting, the columns of discrete Fourier transform (DFT) matrix are adopted as common basis vectors. This is underpinned by the fact that the channel covariance matrix of ULA antenna is a Toeplitz matrix, which asymptotically approximates the circulant matrix that can be diagonalized by DFT matrix.

Such a representation ensures that all users are represented in the same vector space, such that beam selection can be alternatively done by switching on/off the basis vectors. If we use $x_m$ to denote the status of the beam-$m$, we have
\begin{align}
x_m = \left\{ \Pmatrix{1, \quad \text{if beam-$m$ is selected,}\\ 0, \quad \text{otherwise.}} \right.
\end{align}
In a similar way, we can also impose a binary variable $y_i$ to denote user selection, i.e.,
\begin{align}
y_i = \left\{ \Pmatrix{1, \quad \text{if user-$i$ is selected,}\\ 0, \quad \text{otherwise.}} \right.
\end{align}
Hence, after adopting the user and beam selection strategy, the estimate of $i$-th user's channel $\hv_i$, as in \eqref{est-channel} in Section \ref{sec:signal-model}, can be asymptotically written as 
\begin{align}
\hat{\hv}_i \approx \sum_{m=1}^M x_m {\iota_{i,m}} \sqrt{w_{i,m}}\fv_m
\end{align}
if user-$i$ is selected, i.e., $ y_i=1$, where $\fv_m$ is the $m$-th common basis vectors used for channel representation, $w_{i,m}$ is the corresponding coefficient that can be estimated by downlink training, {and $\iota_{i,m}$ is a random variable}. Usually, in the ULA setting, $\fv_m$ comes from the columns of the DFT matrix $\Fm_M$ (e.g., \cite{B.K, T.M}), and in the $M_x \times M_y \times 2$ DP-UPA setting, $\fv_m$ is usually from the columns of the common basis $\IM_2\otimes\Fm_{M_x}\otimes \Fm_{M_y}$ (e.g., \cite{khalilsarai2020dual}). Accordingly, the sparsifying precoder $\Vm_h$ in \eqref{eq:estimated channel} can be specified as the collection of basis vectors $\{\fv_m: x_m=1\}$. 
\subsubsection{Graph Representation} 
To describe the interaction between beam and user selection, we can construct a weighted bipartite graph, where beams are on one side and users are on the other side, and a beam and a user is connected if the beam contributes to channel representation of such user. By such bipartite graph representation, we establish the user-beam association with respect to the weighted combinations of channel representation. 

For the readers' reference, we introduce some graph definitions. Consider an undirected bipartite graph $\Gc=(\Uc,\Vc,\Ec)$ with two vertex sets $\Uc$ and $\Vc$, and an edge set $\Ec$. For any $u \in \Uc$ and $v \in \Vc$, $e=(u,v) \in \Ec$ if and only if $u$ and $v$ are connected with an edge $e$.
The neighborhood of a vertex $v$ is the set of nodes $u \in \Uc$ such that $(u,v) \in \Ec$, i.e., $\Nc_{\Gc}(v) \defeq \{u \in \Uc: (u,v) \in \Ec \}$.
The degree of a vertex $v$ is the number of nodes in the neighborhood of $v$, i.e., $\text{deg}_{\Gc}(v) \defeq \abs{\Nc_{\Gc}(v)}$ where $\abs{\Nc}$ is the cardinality of the set $\Nc$.
The adjacency matrix $\Am$ of the bipartite graph $\Gc=(\Uc,\Vc,\Ec)$ is a binary matrix, where $\Am_{i,j}=1$ if $(i,j) \in \Ec$ and 0 otherwise.
The {Bipartite matching} of the bipartite graph $\Gc=(\Uc,\Vc,\Ec)$ is a subset of edges $\Mc_{\Gc} \subset \Ec$ such that there are not edges in $\Mc_{\Gc}$ sharing the same vertex.

\subsubsection{Beam/user Selection} The aim of beam/user selection is to switch on/off beams and users to avoid beam overlapping among selected users, in order to achieve the maximum multiplexing gain (i.e., prelog of the sum rate expression). The optimization problem was given in \cite{B.K} as
\begin{subequations}\label{*}
\begin{align}
(\Pc_1): \quad
 \max & \quad \left| \Mc_{\Gc'}\right|\\
\label{eq:acs-degree_constraint}
\mathrm{s.t.} & \quad  \text{deg}_{\Gc'}(u_i) \leq T, \quad \forall u_i\in\Uc' \\
\label{eq:acs-power_constraint}
& \quad \sum_{b_m \in \Nc_{\Gc'}(u_i)} w_{i,m} \ge P, \quad \forall u_i\in\Uc'
\end{align}
\end{subequations}
where $\left| \Mc_{\Gc}\right|$ is the maximum cardinality bipartite matching number of the selected subgraph $\Gc'=(\Bc',\Uc',\Ec')$, and the degree constraint \eqref{eq:acs-degree_constraint} guarantees that for each the selected user $u_i \in \Uc'$ the number of beams to represent this user's channel vector is no more than $T$, and the power constraint \eqref{eq:acs-power_constraint} is to ensure that for each selected user $u_i \in \Uc'$ the sum power of representing beams is no less than $P$. The degree and power constraints ensure that each selected user should have a sufficient number of (but not too many) representing beams selected, so that those beams with little contribution to a user's channel representation can be switched off.

As usually there are much more beams than users, by intuition, the maximum cardinality bipartite matching tends to select all users and only the users have severe conflicting representing beams will be unselected. As such, there is only implicit user selection through beam selection. 

\subsubsection{Casting as an MILP}
By establishing the equivalence between the multiplexing gain of the users' effective channel and the maximum cardinality bipartite matching of the graph representation, the objective of ACS can be solved by finding the solutions to an MILP \cite{B.K} involving two sets of binary variables $\{x_m\}_{m=1}^{M}$ and $\{y_i\}_{i=1}^{N_U}$, and a set of continuous ones, $\{z_{i,m}\}_{i=1,m=1}^{N_U,M}$ i.e., 
\begin{subequations}\label{eq:ACS-MILP}
\begin{align}
(\Pc_1'): \max_{x_m,y_i, z_{i,m}} & \sum_{b_m\in\Bc}\sum_{u_i\in\Uc} z_{i,m} \label{eq:MILP-0}\\ 
\mathrm{s.t.} \quad 
& z_{i,m} \le [\Am]_{i,m}, \quad \forall b_m\in \Bc,\ u_i \in \Uc  \label{eq:MILP-1}\\
& \sum_{u_i \in \Uc} z_{i,m} \leq x_m, \quad \forall b_m\in \Bc \label{eq:MILP-2}\\
& \sum_{b_m \in \Bc} z_{i,m}\leq y_i, \quad u_i\in\Uc\label{eq:MILP-3}\\
& \sum_{b_m \in \Bc} [\Am]_{i,m}z_{i,m} \leq T y_i + M(1-y_i), \quad \forall u_i\in \Uc \label{eq:MILP-4}\\
& P y_i \le \sum_{b_m \in \Bc} [\Wm]_{i,m} x_{m}, \quad \forall  u_i \in \Uc  \label{eq:MILP-5}\\
& x_m \le \sum_{u_i \in \Uc} [\Am]_{i,m} y_i, \quad \forall b_m \in \Bc  \label{eq:MILP-6}\\
& x_m,y_i\in\{0,1\}\quad \forall u_i\in\Uc,\ b_m\in \Bc \label{eq:MILP-7}\\
& z_{i,m}\in[0,1]\quad \forall u_i\in \Uc, \ b_m\in\Bc
\end{align}
\end{subequations}
where binary matrix $\Am$ is the adjacency matrix of graph $\Gc$, and $[\Wm]_{i,m}=w_{i,m}$ indicates the contribution of the block beam $m$ to the $i$-th user's channel representation.  By such an MILP formulation, we can adopt off-the-shelf solvers to find a {feasible} solution $\{x^*_m\}_{m=1}^{M}$, $\{y^*_i\}_{i=1}^{N_U}$ and $\{z^*_{i,m}\}_{i=1,m=1}^{N_U,M}$ efficiently, where the selected beams and users are indicated by $\{m: x^*_m=1\}$ and $\{i: y^*_i=1\}$ respectively in the optimal solution yielded by the MILP.

\subsection{Matrix-weight Bipartite Graph Representation}\label{sec-GAP}

From Section \ref{sec-3}, the covariance matrix $\hat{\Rm}_i$ can be asymptotically block-diagonalized by
\begin{align} \label{eq:blk-diag}
\lim_{M_x,M_y \to \infty}\hat{\Rm}_i &= (\Fm_{M_y}\otimes \Fm_{M_x}\otimes \IM_2) \Sigmam_i (\Fm_{M_y}\otimes \Fm_{M_x}\otimes \IM_2)^\H\\
&= \sum_{m_1=1}^{M_y} \sum_{m_2=1}^{M_x} (\fv_{v,m_1} \otimes \fv_{h,m_2} \otimes \IM_2) \Sigmam_i(m_1,m_2)  (\fv_{v,m_1} \otimes \fv_{h,m_2} \otimes \IM_2)^\H
\end{align}
where $\fv_{v,m}$ and $\fv_{h,m}$ are the $m$-th column of DFT matrices $\Fm_{M_y}$ and $\Fm_{M_x}$, respectively, and $\Sigmam_i(m_1,m_2)$ is the $(M_y(m_1-1)+m_2)$-th diagonal block matrix of $\Sigmam_i$.

Instead of using a vector to represent a virtual beam in the ULA and UPA settings, here we use a $M \times 2$ submatrix $\Vm_{m_1,m_2} \defeq \fv_{v,m_1} \otimes \fv_{h,m_2} \otimes \IM_2$ to represent a virtual \emph{cross-polarized} block beam.
Similarly,
we can represent all users' channels by a bipartite graph with matrix-valued weights, where the cross-polarized block beams $\{\Vm_{m_1,m_2}, m_1 \in [M_y], m_2 \in [M_x]\}$ on one side and the users on the other side, and the beams and users are connected with edges of matrix-valued weights $\{\Sigmam_i(m_1,m_2), m_1 \in [M_y], m_2 \in [M_x]\}$. For notational simplicity, we use $[\Sigmam_i]_m$ to denote the matrix-valued weight for $m \in [M/2]$ corresponding to some $(m_1,m_2)$.

\begin{figure}[t]
\centering
\includegraphics[width= 6in,angle=0]{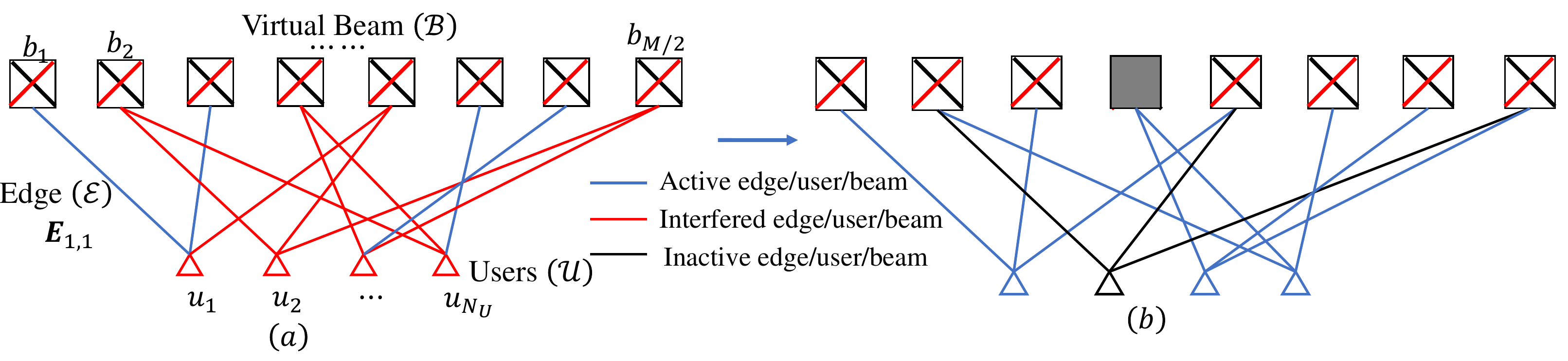}
\caption{Matrix-weight bipartite graph for channel representations, where the virtual block beams are denoted by a square with crossed lines (cf. cross-polarized antenna elements), the users are denoted by triangles, and the weights between beams and users $\Em_{i,m}$ are $2 \times 2$ matrices. (a) Channel representations from different users are overlapping in the sense that they share some common block beams (indicated by red edges) to represent their respective channels. (b) After active channel sparsification applied, some block beams (marked in gray) and users (marked in black) are switched off to avoid channel overlapping, so that the remaining users are not overlapping on active block beams.}
\label{fig_2}
\vspace{-15pt}
\end{figure}

We refer to the scalar-weight graph representation of previous ACS formulation as single-dimension, and the matrix-weight one as multi-dimension bipartite graph representation.
In particular,
we represent the users' channel covariance matrices in respect of the block beams in a matrix weighted bipartite graph $\Gc$ as in Fig. \ref{fig_2}, where a block beam corresponds to a pair of cross-polarized antennas. For notational simplicity, we index the block beams as $m \in [M/2]$. We define the matrix weighted bipartite graph $\Gc=(\Bc,\Uc,\Ec)$, in which the block beams $b\in\Bc$ is on one side and users $u\in\Uc$ on the other side. 
 Therefore, a beam $b_m$ and a user $u_i$ are connected with an edge $(b_m,u_i)\in\Ec$ if $\Am_{i,m}=1$. It is worth noting that, the weight of edges $(b_m,u_i) \in \Ec$, i.e., $\Em_{i,m}=[\Sigmam_i]_m$ with $m$ corresponding to some $(m_1,m_2)$, is a $2 \times 2$ matrix rather than a scalar. With the block beam and user selection parameters $x_m$ and $y_i$, the estimated channel of $i$-th user, as in \eqref{est-channel} in Section \ref{sec:signal-model}, can be approximately written as  
\begin{align}
\hat{\hv}_i \approx \sum_{m=1}^{M/2}x_m \Vm_m \left([\Sigmam_i]_m\right)^{\frac{1}{2}} \iotav_{i,m} 
 \end{align}
where $\Vm_m \in \CC^{M \times 2}$ corresponds to the block basis vectors $\Vm_{m_1,m_2}$ with $m$ corresponding to some $(m_1,m_2)$, 
and $\iotav_{i,m} \in \CC^{2 \times 1}$ is a random vector. Similarly, the sparsifying precoder $\Vm_h$ in \eqref{eq:estimated channel} can be specified as the collection of $\{\Vm_m: x_m=1\}$.

Let us explain the physical meaning of the matrix weighted bipartite graph. Each block beam is illustrated as a pair of crossed lines, in which the cross with red and black lines corresponds to the vertical and horizontal polarization antennas in the DP-UPA array, respectively. As a matter of fact, such a correspondence is resulted from the block-diagonalization of the channel covariance matrix, where it combines the cross-polarized antennas at the same position. The diagonal elements of $\Em_{i,m}\in\CC^{2\times 2}$ represent the channel characteristics of the corresponding antenna, while the off-diagonal elements indicate the channel correlation between the vertical and horizontal antennas due to their cross-polarization. 
In Fig. \ref{fig_2}(a), the edges in red indicate the inter-user spectral correlation between users' channels in the angular domain, which results in potential inter-user interference for multi-user transmission. 
Fig. \ref{fig_2}(b) presents a simple beam and user selection to reduce the possible beam overlapping for activated users.
When actively switching off some block beams and users, the partial channel correlation of the remaining users is reduced, for which there is not any overlap on the activated beams anymore. Note here that the original channels covariance matrices are partially represented by the active block beams only. 

While this may result in partial channel estimation and exploitation, it is expected not to degrade the overall multi-user performance as long as a proper beam and user selection strategy is designed. For the scalar case, it has been evidenced in \cite{B.K} that the beam and user selection by ACS could improve overall performance over the ones without channel sparsification. In what follows, before proceeding with the matrix-weight bipartite graph representation, we take a step back to propose a more general formulation of ACS from the lens of combinatorial optimization.

\subsection{Generalized Multi-dimensional Active Channel Sparsification (MD-ACS)}
Given the above matrix-weight graph representation, we reformulate the original ACS \cite{B.K} in a more general way. The generalization lies in two aspects: one is to extend one-to-one matching to many-to-many matching, the other one is to {generalize scale-weight  (i.e., single-dimensional) to matrix-weight (i.e., multi-dimensional) matching with rate maximization and interference mitigation embedded instead of maximizing multiplexing gain.}

\subsubsection{From One-to-one to Many-to-Many Matching}
In the original formulation in \eqref{*} of ACS, a subgraph $\Gc'$ is selected with active beams $\Bc'$ and users $\Uc'$, and the maximal bipartite matching is constructed in the induced subgraph $\Gc'=(\Bc',\Uc',\Ec')$. It has been shown in \cite{B.K} that the cardinality of the maximal bipartite matching is equal to the multiplexing gain of multi-user transmission. If we take a step back, instead of working on the maximal (one-to-one) bipartite matching in the selected graph, we could consider the many-to-many matching on the original bipartite graph $\Gc$, where a number of beams can be associated to one user, and likewise each beam can serve multiple users as long as inter-user interference is properly controlled. As such, a more general formulation can be given by
\begin{subequations}\label{GACS}
\begin{align}
(\Pc_2): \quad
 \max & \quad w(\Mc^*_{\Gc})\\
\label{eq:degree_constraint}
\mathrm{s.t.} & \quad  \text{deg}_{\Gc}(u_i) \leq \kappa_{b,i}, \quad \forall u_i\in\Uc \\
\label{eq:power_constraint}
& \quad \text{deg}_{\Gc}(b_m) \le \kappa_{u,m}, \quad \forall b_m\in\Bc
\end{align}
\end{subequations}
where $\Mc^*_{\Gc}$ is the set of many-to-many matching, which is a generalization of one-to-one matching,  {$\kappa_{b,i} \le T/2$ is the maximum beams can be assigned to user $i$ to guarantee that channel estimation is feasible \cite{B.K}}, and $\kappa_{u,m}$ the maximum users that can reuse the same beam $m$ so that not much interference is caused one another. In contrast to the one-to-one matching, many-to-may matching allows each vertex on one side to be matched with multiple vertices on the other side.

The above many-to-many weighted matching is equivalent to the generalized multi-assignment problem (GMAP) \cite{park1998lagrangian}, which is a generalized version of the assignment problem corresponding to one-to-one matching.
The GMAP considers to assign a set of tasks to a set of agents. When a task is assigned to an agent, it produces profit and incurs cost. The aim of GMAP is to assign each task to multiple agents, where one agent can conduct multiple tasks, so that the total cost of all tasks is minimized and/or the total profit is maximized.  
Under the context of the multiple beam-user assignment, the above generalized ACS formulation can be reformulated as a GMAP with an integer programming as follows
\begin{subequations}\label{eq:GAP_1}
\begin{align}
( \Pc_2'): \max_{z_{i,m}} \quad & \sum_{m=1}^{M/2} \sum_{i=1}^{N_U} w_{i,m} z_{i,m}  \label{eq:gap-obj}\\
\st \quad &\sum_{i=1}^{N_U}{z_{i,m}} \leq \kappa_u,\  \forall b_m\in\Bc \label{eq:gap-beam} \\
&\sum_{m=1}^{M/2} {z_{i,m}} \leq \kappa_b,\ \forall u_i\in\Uc \label{eq:gap-user}\\
&z_{i,m}\in \{0,1\}, \ \forall b_m\in\Bc, \forall u_i\in\Uc
\end{align}
\end{subequations}
where $z_{i,m}$ is a binary decision variable such that $z_{i,m}=1$ indicates the $m$-th block beam is assigned to $i$-th user, and 0 otherwise; $w_{i,m}$ is the corresponding profit for such an assignment, and it is a function of the matrix-weight $\Em_{i,m}$; $\kappa_u$ and $\kappa_b$ are the maximum number of users and block beams to match each beam and user, respectively. For simplicity, we assume each user (resp. beam) is associated to the same number of beams (resp. users).

\subsubsection{From Single-dimensional to Multi-dimensional Matching}
The generalized formulation in \eqref{eq:GAP_1} reduces the size of the integer program compared to \eqref{*} to a great extent, thanks to the GMAP formulation and the matrix-weight bipartite representation of beam-user association. However, the merits in the original formulation, e.g., multiplexing gain maximization in \eqref{eq:MILP-0} and interference control in \eqref{eq:MILP-5}, are totally lost.

To remedy the above reformulation, in what follows, we integrate the consideration of sum rate maximization into the objective function, especially into the parameters $\{w_{i,m}\}$, and relegate the interference control to a constraint. Such a remedy results in a nonlinear formulation, which motivates us to propose a greedy algorithm to solve it in an efficient way.
\paragraph{Embedding Sum Rate Maximization and Interference Control}

Given the subset of selected users $\Sc=\{i:y^*_i=1\}$, the achievable rate of $i$-th user with downlink precoder $\pv_i$ can be written by
\begin{align}
R_i=\log\left(1+\frac{\Abs{\hv_i^\H \pv_i}}{\sigma^2+\sum_{j \in \Sc \backslash i} \Abs{\hv_i^\H \pv_j}} \right).
\end{align}
For the sake of tractability of optimization, we consider an asymptotic version of sum rate when $M_x,M_y \to \infty$ so that the asymptotic zero-forcing precoder of $i$-th user can be simply written by the column vectors of common basis $\Vm$. In particular, we have
\begin{align}
\pv_i & \in\Rc\{\hv_i \} \cap \Nc \{\hv_j, j\in \Sc \backslash i\}\\
&={\{\Vm_m: x_m \trace([\Sigmam_{i}]_m) \ge \delta, x_my_j \trace([\Sigmam_{j}]_m) \le \delta, \forall j\in[N_U] \backslash i,m\in [M/2]\}}
\end{align}
where $\Rc\{\cdot\}$ and $ \Nc\{\cdot\}$ are the range and null spaces of the subspace spanned by the vectors, and $\delta$ is a threshold to determine if the block beam is strong enough to be considered. 

Hence, with such asymptotic precoder, the asymptotic rate\footnote{We point out that the asymptotic rate here is with respect to the number of antennas, which is different from those at high SNR in the literature.} of $i$-th user can be written as 
\begin{align}
R_i^\infty&=\log{\left(1+\frac{\trace\left(\sum_{m=1}^{M/2}x_m[\Sigmam_i]_m[\Sigmam_i]_m^\H\right)}{\sigma^2+\trace\left(\sum_{m=1}^{M/2}\sum_{j=1,j\neq i}^{N_U}y_j x_m[\Sigmam_i]_m[\Sigmam_j]_m^\H\right)}\right)}\\
&=\log{\Big(\sigma^2+\trace\big(\sum_{m=1}^{M/2}\sum_{j=1}^{N_U}y_j x_m[\Sigmam_i]_m[\Sigmam_j]_m^\H\big)\Big)}-\log{\Big(\sigma^2+\trace\big(\sum_{m=1}^{M/2}\sum_{j=1,j\neq i}^{N_U}y_j x_m[\Sigmam_i]_m[\Sigmam_j]_m^\H\big)\Big)} \notag \\
&\ge \sum_{m=1}^{M/2} \left(\log{\Big(\frac{2\sigma^2}{M}+\trace\big(\sum_{j=1}^{N_U}y_j x_m[\Sigmam_i]_m[\Sigmam_j]_m^\H\big)\Big)} - \eta_{i,m}\right)
\end{align}
where the first term is due to Jensen's inequality with $\log(\cdot)$ being a concave function, and the second term is due to an artificially introduced constraint
\begin{align}
\log{\Big(\sigma^2+\trace\big(\sum_{m=1}^{M/2}\sum_{j=1,j\neq i}^{N_U}y_j x_m[\Sigmam_i]_m[\Sigmam_j]_m^\H\big)\Big)} \le \sum_{m=1}^{M/2} {\eta_{i,m}}
\end{align}
With Jensen's inequality, the above constraint can be relaxed to
\begin{align} \label{eq:rate-constraint}
\sum_{m=1}^{M/2} \log{\Big(\frac{2\sigma^2}{M}+\trace\big(\sum_{j=1,j\neq i}^{N_U}y_j x_m[\Sigmam_i]_m[\Sigmam_j]_m^\H\big)\Big)} \le \sum_{m=1}^{M/2} \eta_{i,m}.
\end{align}

Let us introduce two matrices $\Pm \in \CC^{N_U \times \frac{M}{2}}$ and $\Cm \in \CC^{N_U \times \frac{M}{2}}$ such that
\begin{align}
[\Pm]_{i,m}&=\log \trace\Big(\sum_{j=1}^{N_U}y_j[\Sigmam_i]_m[\Sigmam_j]_m^\H\Big),\label{profit}\\
[\Cm]_{i,m}&=\log \trace\Big(\sum_{j=1,j\neq i}^{N_U}y_j[\Sigmam_i]_m[\Sigmam_j]_m^\H\Big).\label{cost}
\end{align}
The maximization of the asymptotic sum rate with joint user and beam selection can be approximately formulated in the following way
\begin{subequations}\label{eq:GAP}
\begin{align}
(\Pc_3): \max_{z_{i,m}} \quad & \sum_{m=1}^{M/2} \sum_{i=1}^{N_U} z_{i,m} [\Pm]_{i,m} \label{eq:gap-obj}\\
\st \quad &
\eqref{eq:gap-beam},
\eqref{eq:gap-user},\\
&[\Cm]_{i,m} z_{i,m}  \le \eta_{i,m}, \ \forall b_m\in\Bc, \forall u_i\in\Uc \label{eq:gap-cost} \\
&z_{i,m}\in \{0,1\}, \ \forall b_m\in\Bc, \forall u_i\in\Uc
\end{align}
\end{subequations}
where the objective function \eqref{eq:gap-obj} comes from the lower bound of the asymptotic rate, with the constant parts dropped for simplicity, 
and the final constraint \eqref{eq:gap-cost} due to the constraint \eqref{eq:rate-constraint} to control interference, and $z_{i,m}=x_my_i$ is binary-valued. 

The above optimization formulation can be recognized as a GMAP with an additional constraint \eqref{eq:gap-cost}.
The lower bound of the asymptotic rate can be regraded as the profits, and the constrained term in \eqref{eq:rate-constraint} can be treated as costs. As such, we refer to $\Pm$ and $\Cm$ as the profits and costs matrices, respectively.
While the optimization problem \eqref{eq:GAP} is linear for the parameters $\{z_{i,m}\}$, the profits and costs matrices $\Pm$ and $\Cm$ are dependent of user selection $\{y_j\}$, which is entangled with $\{z_{i,m}\}$ as $z_{i,m}=x_my_i$. This makes the problem a nonlinear integer program with respect to $\{x_m\}$ and $\{y_i\}$, which is challenging to solve. To overcome this, we propose a low-complexity greedy algorithm, avoiding overlaps between any two users in the matrix-weight bipartite graph. 

\paragraph{Greedy Algorithm}
As detailed earlier, given the channel covariance matrix $\hat{\Rm}_i$ with permuted rows and columns from the original one ${\Rm}_i$, we can construct a matrix-weight bipartite graph representation where the matrix weights $[{\Sigmam}_i]_m$ come from the block diagonalization of $\hat{\Rm}_i$.
{However, when it comes to the practical scenarios with a finite number of antennas, $\hat{\Rm}_i$ is not perfectly block-diagonalizable with the DFT matrix as in \eqref{eq:blk-diag}. To overcome this, a possible way is to approximate the matrix-weight $[{\Sigmam}_i]_m$ by
\begin{align} \label{eq:practical-decompose}
[\hat{\Sigmam}_i]_m &= [(\Fm_{M_y}\otimes \Fm_{M_x}\otimes \IM_2)^\H \hat{\Rm}_i (\Fm_{M_y}\otimes \Fm_{M_x}\otimes \IM_2)]_{m,m}
\end{align}
where $[\cdot]_{m,m}$ is the $m$-th $2\times 2$ block diagonal submatrix with $m \in [M/2]$.
It is readily verified that $\lim_{M_x,M_y \to \infty} [\hat{\Sigmam}_i]_m = [{\Sigmam}_i]_m$ for all $i,m$. Thus, in what follows, we use $[\hat{\Sigmam}_i]_m$ instead of $[{\Sigmam}_i]_m$ for algorithm design in the practical scenarios.
}

For ease of presentation, we introduce a $N_U \times \frac{M}{2} $ matrix $\Psim$ with $[\Psim]_{i,m}=\trace([\hat{\Sigmam}_i]_m)$ to indicate the contribution of the $m$-th block-beam to the $i$-th user.
Let us define a binary matrix $\Am'$ for the greedy algorithm with elements specified as    
\begin{align} \label{eq:adj-matrix}
[\Am']_{i,m}=\left\{ \Pmatrix{1, \quad
\text{if } m \in \text{max}^{n_p} \big\{[\Psim]_{i,m'},\forall m' \in [M/2]\big\},\\0, \quad \text{Otherwise,}}
\right.
\end{align}
where $n_p\in[M/2]$ is a tunable integer parameter, and $\text{max}^{n_p}\{\Ac\}$ returns the indices of the largest $n_p$ values in the set $\Ac$.  
Here $\Am'$ serve as a mask to filter out the insignificant weight matrices $\{[\hat{\Sigmam}_i]_m, \forall m\}$ and only keep $n_p$ largest ones. 
In particular, if we set $n_p=\kappa_b$, then after masking with $\Am'$, there are at most $\kappa_b$ block beams left that are connected to each user, so that the constraint \eqref{eq:gap-user} is automatically satisfied.

For the greedy algorithm, according to the asymptotic analysis of sum rate, we define a specific evaluation function as 
\begin{align}\label{eq:final-profit}
\Phi(\Pm,\Cm) = \sum_{i} \sum_m x_my_i([\Pm]_{i,m} - [\Cm]_{i,m}),
\end{align}
where $\Pm$ and $\Cm$ are profit and cost matrix as shown in \eqref{profit} and \eqref{cost}, for which $[\hat{\Sigmam}_i]_m$ is used. 

Given the bipartite graph representation $\Gc=(\Bc,\Uc,\Ec)$ and the matrix-weight $[\hat{\Sigmam}_i]_m$ on the edge $(b_m,u_i)$, we propose a greedy algorithm to solve the optimization problem $(\Pc_3)$.
The detailed procedure is outlined in Algorithm \ref{greedy_algorithm}. 
\begin{algorithm}[t]
\caption{Greedy Algorithm for Generalized MD-ACS}
\begin{algorithmic}[1]
\State {\bf Input}: $\{\hat{\Rm}_i, \forall i\}$, $\kappa_u, \kappa_b$
\State {\bf Initialization}: $x_m=y_i=1$ for all $i \in [N_U],m \in [M/2]$ 
\State Produce $2 \times 2$ diagonal submatrices $\{[\hat{\Sigmam}_i]_m\}_{i=1,m=1}^{N_U, M/2}$ from $\{\hat{\Rm}_i\}_{i=1}^{N_U}$ according to \eqref{eq:practical-decompose} 
\State Construct the bipartite graph representation $\Gc=(\Bc,\Uc,\Ec)$ with matrix-weights $\{[\hat{\Sigmam}_i]_m\}$ for the edge $(b_m,u_i) \in \Ec$, and construct the weight matrix $\Psim$ with $[\Psim]_{i,m}=\trace([\hat{\Sigmam}_i]_m)$
\State Compute profit and cost matrix $\Pm, \Cm$ as \eqref{profit}-\eqref{cost} with $[\hat{\Sigmam}_i]_m$
\State Construct a binary matrix $\Am'$ as in \eqref{eq:adj-matrix} with $n_p=\kappa_b$, such that \eqref{eq:gap-user} is satisfied 
\State Update $\Psim$ as $\Psim\gets\Am'\odot \Psim$, and set $\Mc = \{m: \sum_{i=1}^{N_U} x_my_i [\Am']_{i,m}> \kappa_u, \forall m \in [M/2]\}$
\While{$\Mc \neq \emptyset$}  
\State Select the beam $m \in \Mc$ 
\State Compute \eqref{eq:final-profit} as $\Phi_b$ if the beam is switched off, i.e., $x_m=0$ 
\State Compute \eqref{eq:final-profit} as $\Phi_u$ if only $\kappa_u$ users with the largest $[\Psim]_{i,m}$ are selected, i.e., $y_i=0$ for all $i \notin \max^{\kappa_u}\{[\Psim]_{i',m}, \forall i'\}$
\If{$\Phi_b > \Phi_u$}
\State $x_m = 0$, and $[\Am']_{i,m}=0$, $\forall i \in [N_U]$
\Else
\State $y_i=0$, and $[\Am']_{i,m}=0$, $\forall m \in [M/2]$, $i \notin \max^{\kappa_u}\{[\Psim]_{i',m},\forall i'\}$ 
\EndIf
\State Update $\Psim$ as $\Psim \gets [\Am']\odot \Psim$
\State Update $\Mc \gets \{m: \sum_{i=1}^{N_U} x_m y_i [\Am']_{i,m}> \kappa_u, \forall m \in [M/2]\}$
\EndWhile
\State {\bf Output}: $\{x_m\}_{m=1}^{M/2}$, $\{y_i\}_{i=1}^{N_U}$ 
\end{algorithmic}  
\label{greedy_algorithm} 
\end{algorithm}

Let us explain the greedy algorithm in detail.  
Instead of maximizing the profit with the cost as the constraint in \eqref{eq:GAP}, we define a new profit function as in \eqref{eq:final-profit} which takes both original profits and costs into account.
At the beginning, each user $i$ selects $\kappa_b$ block beams with the largest weights as specified by $\Am'$ in \eqref{eq:adj-matrix}. This is to make the constraint \eqref{eq:gap-user} automatically satisfied. 
Then, each block beam $m$ determines whether the number of served users exceeds its capability $\kappa_u$ to satisfy the constraint \eqref{eq:gap-beam}. 
For those beams with more than $\kappa_u$ users served that violate the constraint \eqref{eq:gap-beam}, we need to determine if it is better to switch off this beam $m$, or some users so that the constraint \eqref{eq:gap-beam} is satisfied. To make such a decision, we compute and compare two quantities $\Phi_b$ and $\Phi_u$ when either option is applied with respect to the newly defined profit in \eqref{eq:final-profit}. This operation repeats till the constraint \eqref{eq:gap-beam} is satisfied for all active block beams. After each iteration, the weight matrix $\Phim$ and the binary matrix $\Am'$ will be updated, so that the deactivated users or beams will not be considered in the future.  As such, the greedy algorithm results in a feasible solution after at most $\frac{M}{2}$ updates.

To summarize, compared with the original single-dimension ACS formulation in \cite{B.K,khalilsarai2020dual}, our proposed MD-ACS with greedy algorithm has the following advantages.

\begin{itemize}
\item While the original ACS is dedicated to the maximization of multiplexing gain, our proposed MD-ACS takes both sum rate maximization and interference control into account, which leads to better performance at finite SNR, as will be shown in Section \ref{sec:numerical}.
\item In the original ACS, the same threshold is applied for all users and beams to construct the bipartite graph representation, and the resulting graph is sensitive to such threshold. In addition, there are quite a few tunable parameters in \eqref{eq:ACS-MILP}, which are challenging to fine-tune to arrive at the sweet spot for the optimal solution, so that an improper choice probably results in severe performance degradation. For our proposed MD-ACS, the integer-valued parameters $\kappa_u$ and $\kappa_b$ are used to construct the bipartite graph, and the resulting graph is more flexible and suitable for greedy search. 
\item  Due to the pre-determined bipartite graph representation and the implicit user selection, the original ACS is suitable to the homogeneous scenarios, whereas our proposed greedy algorithm is suitable for both homogeneous and heterogeneous scenarios (e.g., with both indoor and outdoor users), thanks to the adaptive bipartite graph construction and the explicit user selection, as will be demonstrated in Section \ref{sec:numerical}.  
\end{itemize}

\section{Numerical Results}
\label{sec:numerical}
In this section, we provide the numerical results of our proposed method --- generalized multi-dimensional active channel sparsification (MD-ACS) --- compared with the state-of-the-art ones in the practical DP-UPA FDD massive MIMO scenarios.   
The following baseline methods are considered for comparison.
\begin{itemize}
\item \textbf{No Selection}: All users and beams are activated.
\item \textbf{JSDM}: A clustering algorithm that divides users into groups according to the similarity of their channel covariance matrices \cite{JSDM}. {In each group, a user is randomly selected on behalf of the corresponding cluster}. In JSDM, the number of clusters $K$ is essential, and is set to $K=\sum_i y_i^*$, where $\yv^*$ is the user selection vector obtained by our greedy algorithm.
\item \textbf{ACS}: The original ACS on scale-weight bipartite graph representation, which was first proposed in \cite{B.K} for ULA, and later on extended to DP-ULA in \cite{khalilsarai2020dual};
\item \textbf{ACS-Matrix}: The conventional ACS with the MILP formulation on the matrix-weight bipartite graph representation, where the constraint \eqref{eq:MILP-5} is replaced by 
\begin{align}
 P y_i \le \sum_{b_m \in \Bc} \trace\left([\Wm]_{i,m}\right) x_{m}, \quad \forall  u_i \in \Uc.
\end{align}
\item \textbf{Greedy Algorithm}: The proposed MD-ACS with greedy algorithm implementation as specified in Algorithm \ref{greedy_algorithm}.
\end{itemize}
The downlink channel training and precoding follow the procedure in Section \ref{sec:signal-model}, where the pilot matrix $\Sm$ is a $T\times M'$ orthogonal matrix, with {$M'=2\sum_{m=1}^{M/2}x_m \le M$ being the number of activated virtual beams}, and the average pilot signal power is set to $\rho^{\text{p}}=1$. 

\begin{figure}
\begin{minipage}{0.48\linewidth}
\centering
\includegraphics[width= 3.2in,angle=0]{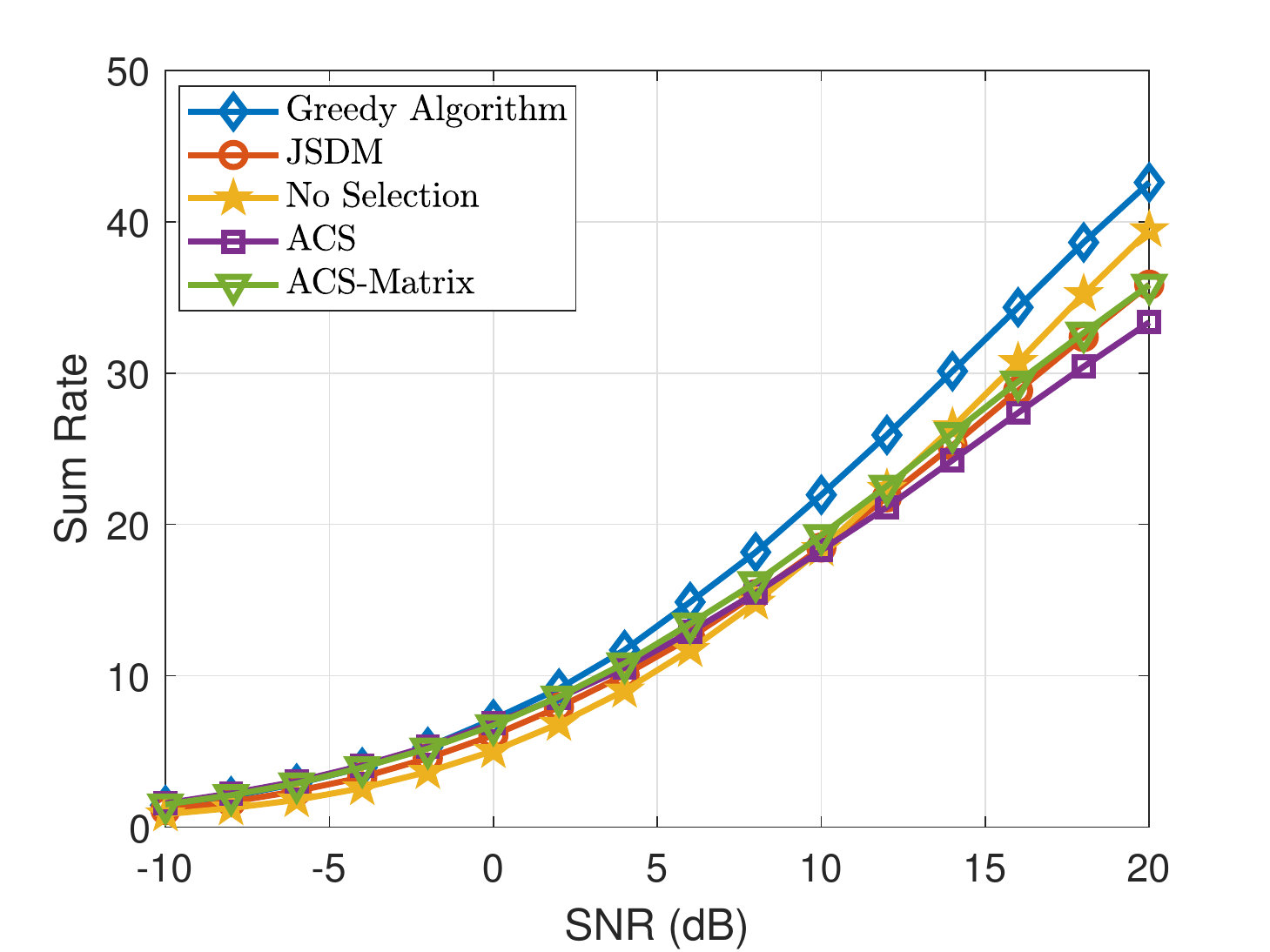}
\caption{Sum rate versus SNR with $4\times 4\times 2$ DP-UPA, $N_U=15$ users and $T=16$ timeslots.}
\label{SNR_32_15}
\end{minipage}
\quad
\begin{minipage}{0.48\linewidth}
\centering
\includegraphics[width= 3.2in,angle=0]{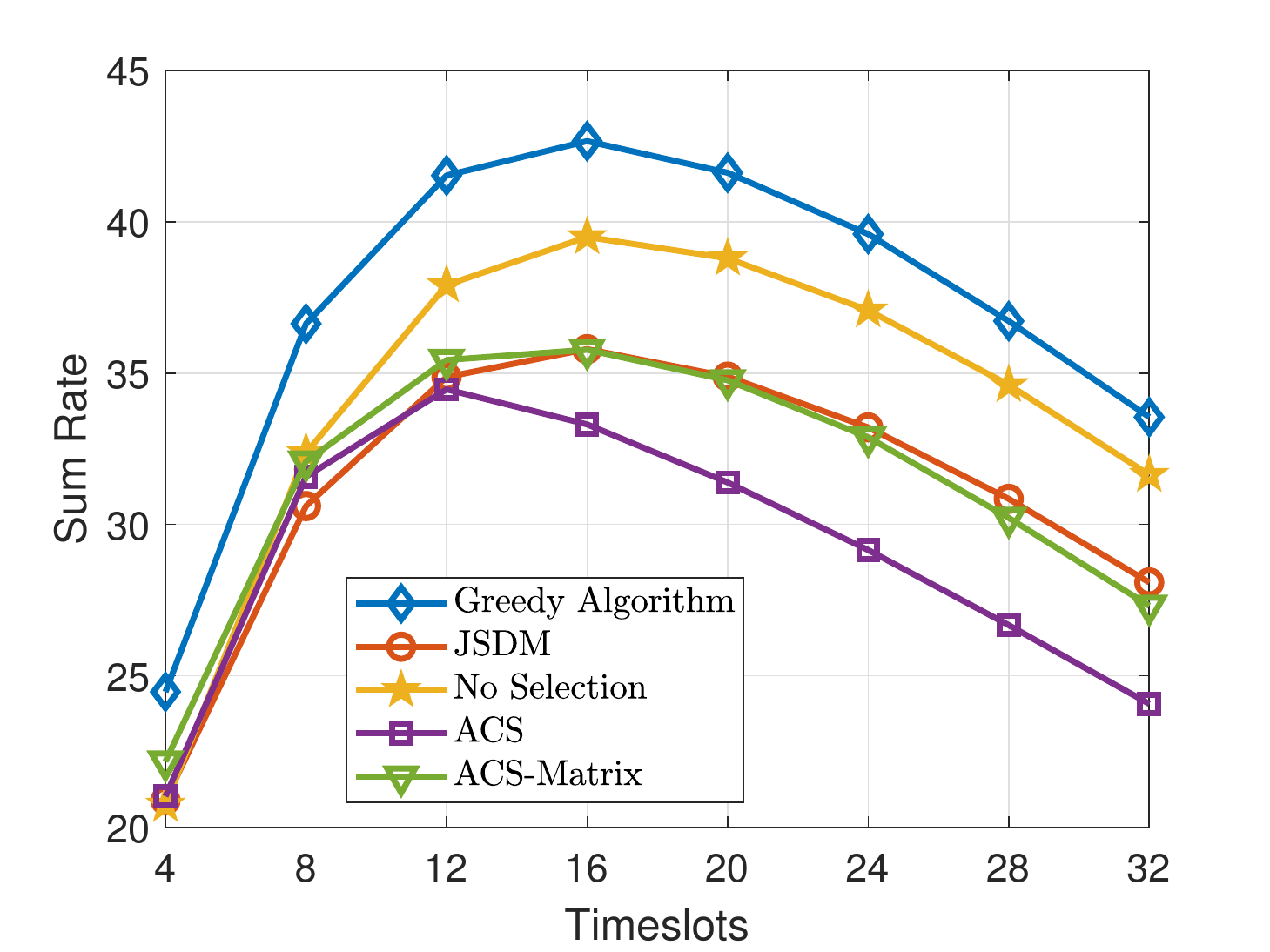}
\caption{Sum rate versus pilot dimension with $4\times 4\times 2$ DP-UPA, $N_U=15$ users and $\mathrm{SNR}=20$ dB.}
\label{T_32_15}
\end{minipage}
\begin{minipage}{0.48\linewidth}
\centering
\includegraphics[width= 3.2in,angle=0]{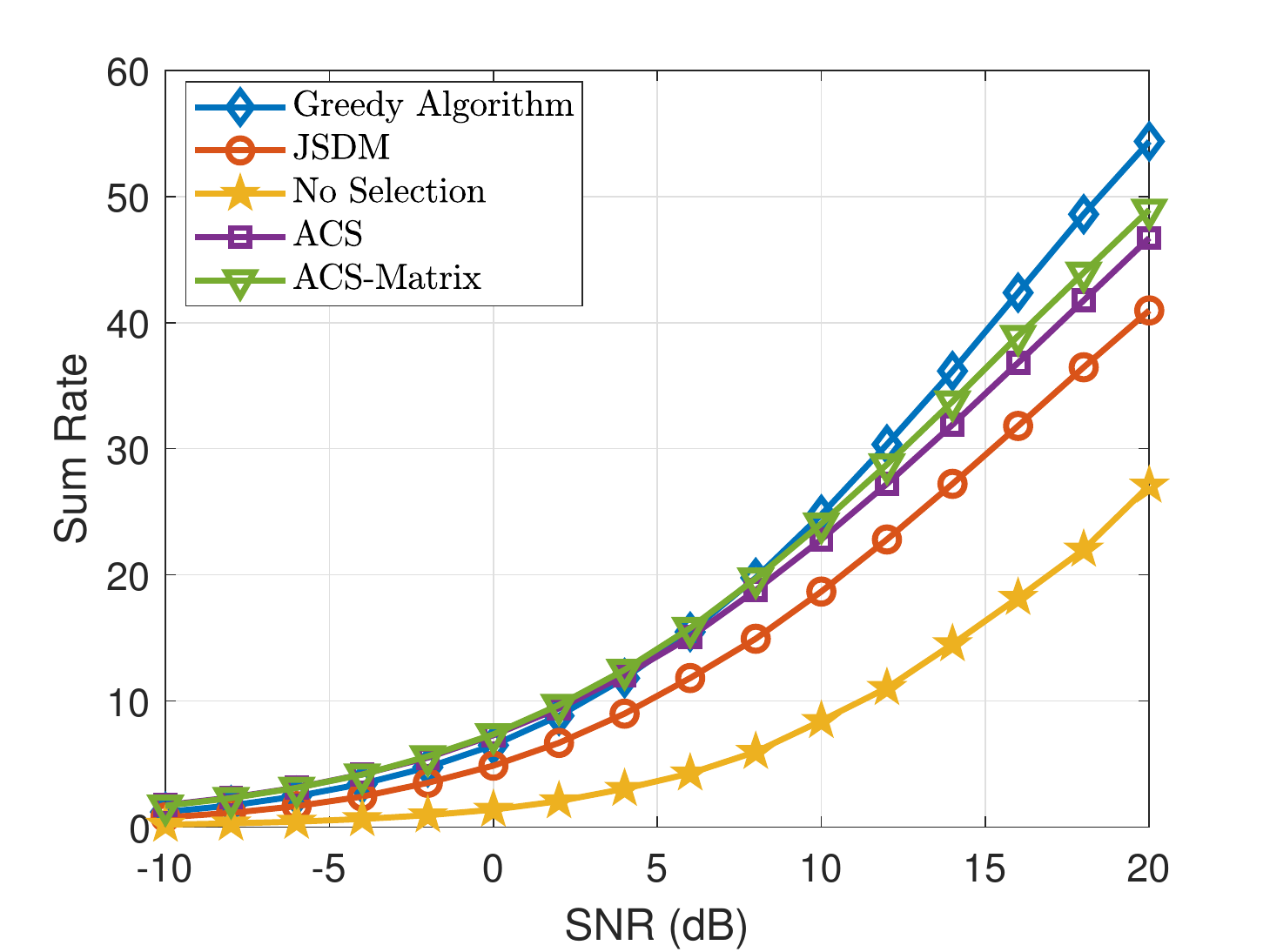}
\caption{Sum rate versus SNR with $4\times 4\times 2$ DP-UPA, $N_U=30$ users and $T=16$ timeslots.}
\label{SNR_32_30}
\end{minipage}
\quad
\begin{minipage}{0.48\linewidth}
\centering
\includegraphics[width= 3.2in,angle=0]{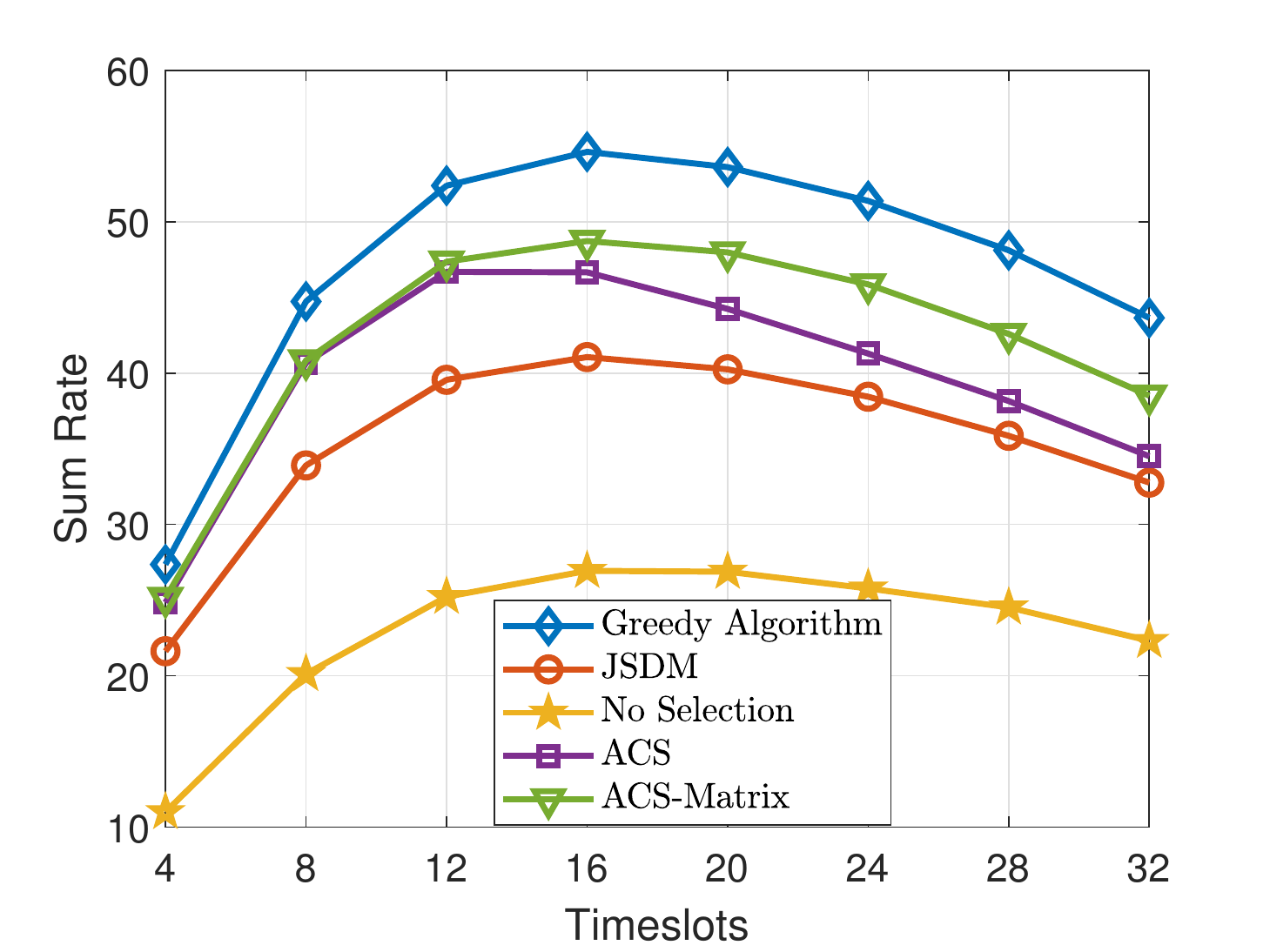}
\caption{Sum rate versus pilot dimension with $4\times 4\times 2$ DP-UPA, $N_U=30$ users and $\mathrm{SNR}=20$ dB.}
\label{T_32_30}
\end{minipage}
\vspace{-20pt}
\end{figure}

For the simulation scenarios, we consider FDD downlink transmission in a single-cell massive MIMO network, where the base station is equipped with $M=M_x \times M_y \times 2$ DP-UPA antenna and serves $N_U$ single-antenna users. In order to evaluate the algorithms comprehensively and fairly, we adopt the QuaDRiGa channel model \cite{jaeckel2014quadriga} to generate downlink channel vectors $\hv$. According to the 3GPP and the QuaDRiGa specifications \cite{QuadrigaD}, the Inter-Site Distance is set to be 500m and the `3GPP-3D-UMA' scenario is considered. For the users' located rules, the minimum distance from users to the base station is 10m. We choose 50\% indoor and 50\% outdoor users for downlink transmission, where the height of all the outdoor users is set to 1.5m. In all simulation scenarios, we assume the downlink channel covariance matrix is somehow available, which can be simply obtained by $\Rm=\frac{1}{N}\sum_{t=1}^{N} \hv_t \hv_t^\H$ using $N=1000$ downlink channel vector realizations $\hv_t$ generated from QuaDRiGa, or obtained from uplink channel covariance matrix by leveraging uplink-downlink reciprocity (e.g., \cite{B.K}). Unless otherwise explicitly specified, for all the simulation scenarios, we choose the following parameters: $\kappa_b=3$; $\kappa_u=20$ for 64 antenna configuration and $\kappa_u=12$ for 32 antenna configuration; pilot dimensions of training phase $T=16$; for 32 antennas (Fig. \ref{SNR_32_15}-\ref{T_32_30}), the antenna array is $M_x=M_y=4$ and the FDD frame length is $T_c = 64$, and for 64 antennas (Fig. \ref{SNR_64_30}-\ref{User_64_20T}), $M_x=4$, $M_y=8$ and $T_c = 72$.

Figures \ref{SNR_32_15} and \ref{T_32_15} illustrate the sum rate of downlink transmission with $N_U=15$ users in total versus SNR and the pilot dimension of the training phase, respectively. We can observe in Fig. \ref{SNR_32_15} that, the proposed MD-ACS with greedy algorithm outperforms all other methods, and the gap is increasing as SNR goes. The ACS-like methods (i.e., ACS and ACS-Matrix) perform poorly at high SNR compared with No Selection, which is probably due to the fact that the number of users for selection is quite limited so that activating all users may not be a bad idea.
In Fig. \ref{T_32_15}, it appears the sum rate first increases as the pilot dimension does, because higher pilot dimension yields higher estimation accuracy of downlink channel, and therefore more accuracy downlink precoding. The sum rate is decreasing when pilot dimension increases further, because the more resource the training phase occupies, the less the transmission phase could use.
For the ACS-like methods (i.e., ACS and ACS-Matrix), it looks too many users have been switched off, which results in severe performance degradation when $T$ is large. 
In Figures \ref{SNR_32_30} and \ref{T_32_30}, $N_U=30$ users are considered. It is observed that our proposed MD-ACS with greedy algorithm consistently outperform other methods. In this scenario, with a sufficiently large number of users, both ACS-Matrix and ACS perform better than JSDM and No Selection. The No Selection method confronts severe performance degradation -- it is because there are too many users in the network, and user selection is crucial. In these simulations, ACS-Matrix outperforms the conventional ACS approach, which demonstrates the effectiveness of using matrix-weight bipartite graph representation.
Notably, from Fig. \ref{T_32_15} and Fig. \ref{T_32_30}, the optimal pilot dimensions that maximize the sum rate are different across algorithms. The optimal pilot dimension of ACS is around $T=12$ while others are around $T = 16$. This suggests that ACS seems more dedicated to beam selection, while others (including the greedy algorithm) prefer user selection. 

\begin{figure}
\begin{minipage}{0.48\linewidth}
\centering
\includegraphics[width= 3.2in,angle=0]{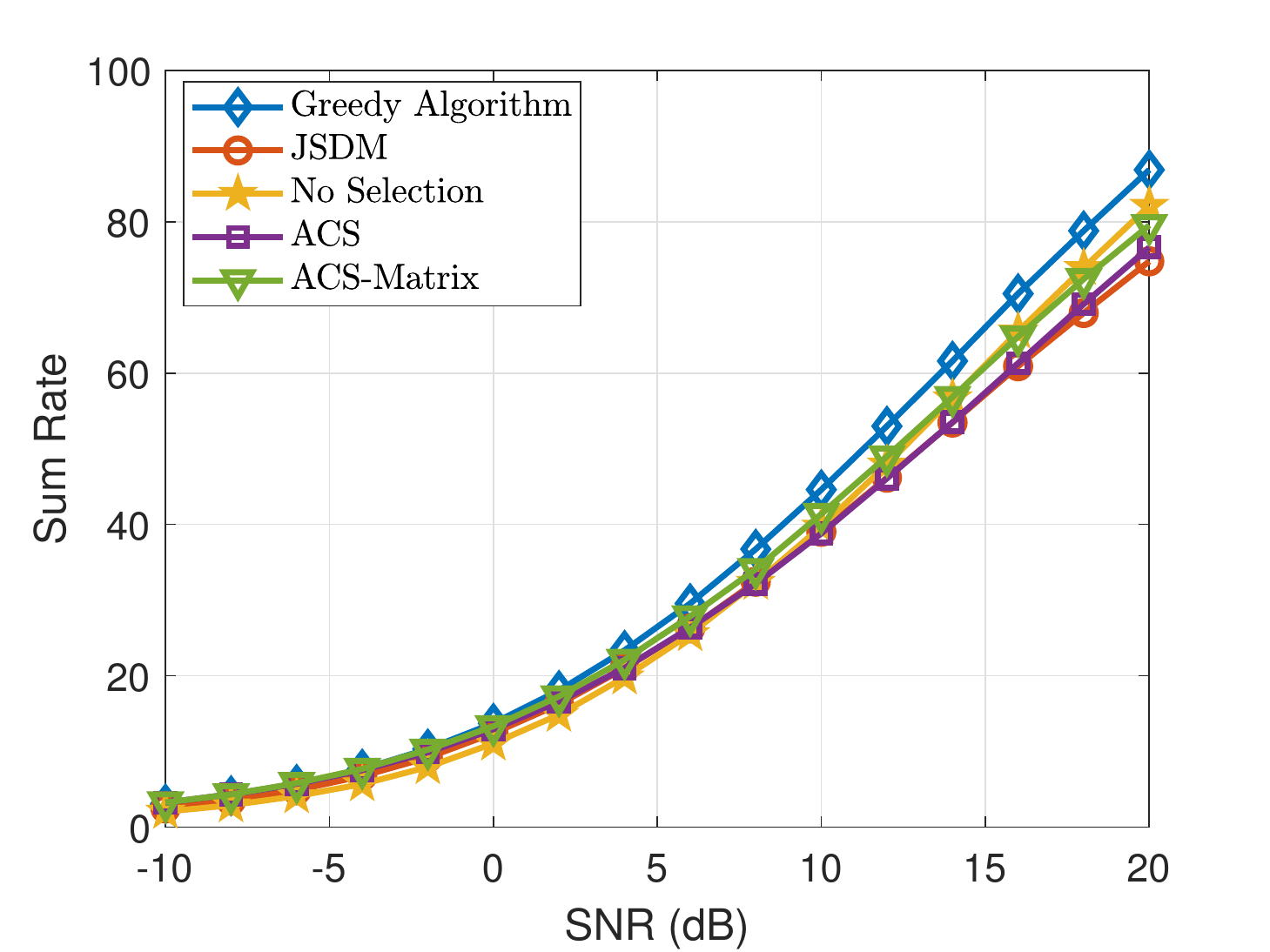}
\caption{Sum Rate versus SNR with $4\times 8\times 2$ DP-UPA, $N_U=30$ users and $T=16$ timeslots.}
\label{SNR_64_30}
\end{minipage}
\quad
\begin{minipage}{0.48\linewidth}
\centering
\includegraphics[width= 3.2in,angle=0]{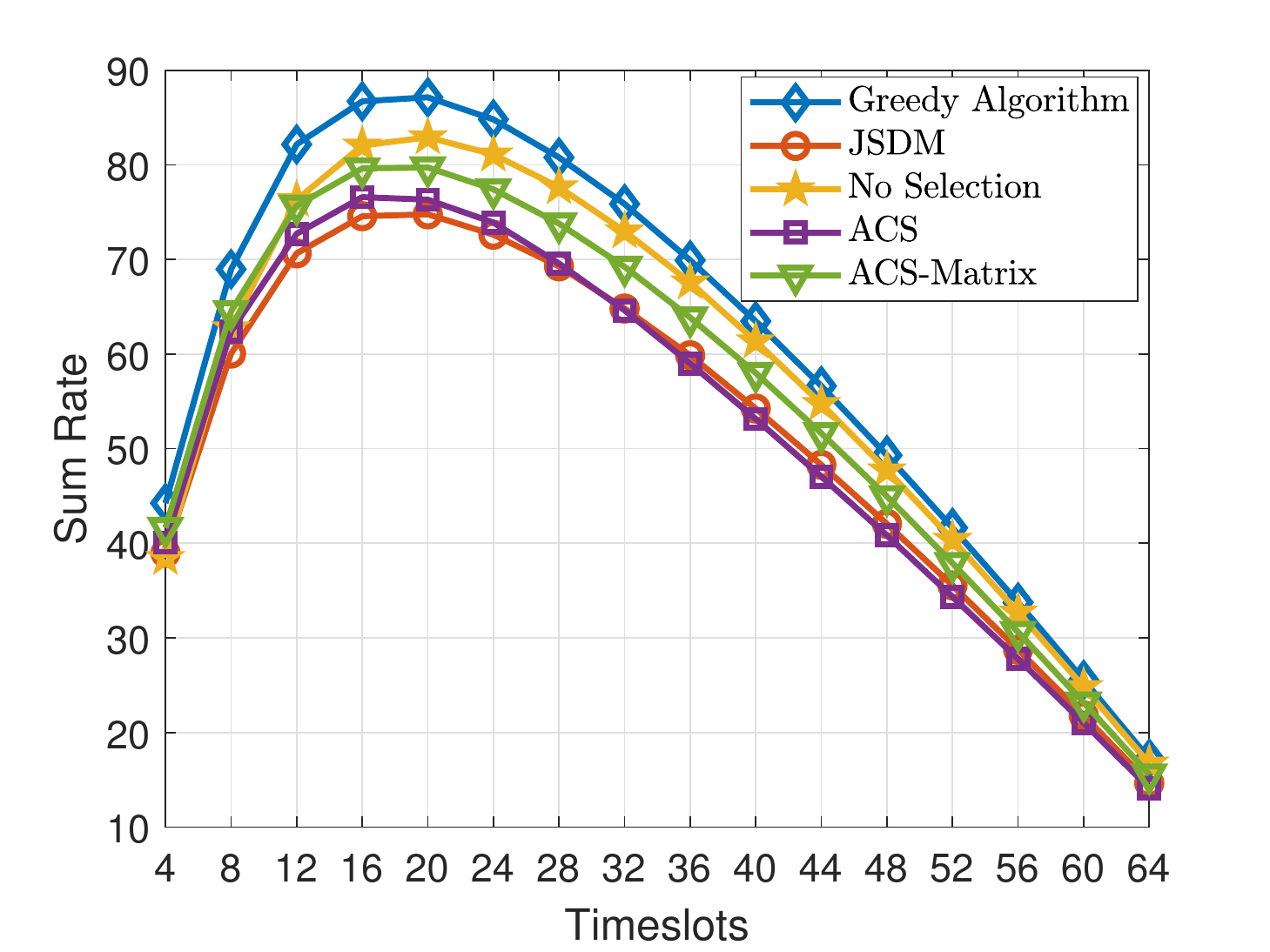}
\caption{Sum Rate versus Timeslots with $4\times 8\times 2$ DP-UPA, $N_U=30$ users and $\mathrm{SNR}=20$ dB.}
\label{T_64_30}
\end{minipage}
\begin{minipage}{0.48\linewidth}
\centering
\includegraphics[width= 3.2in,angle=0]{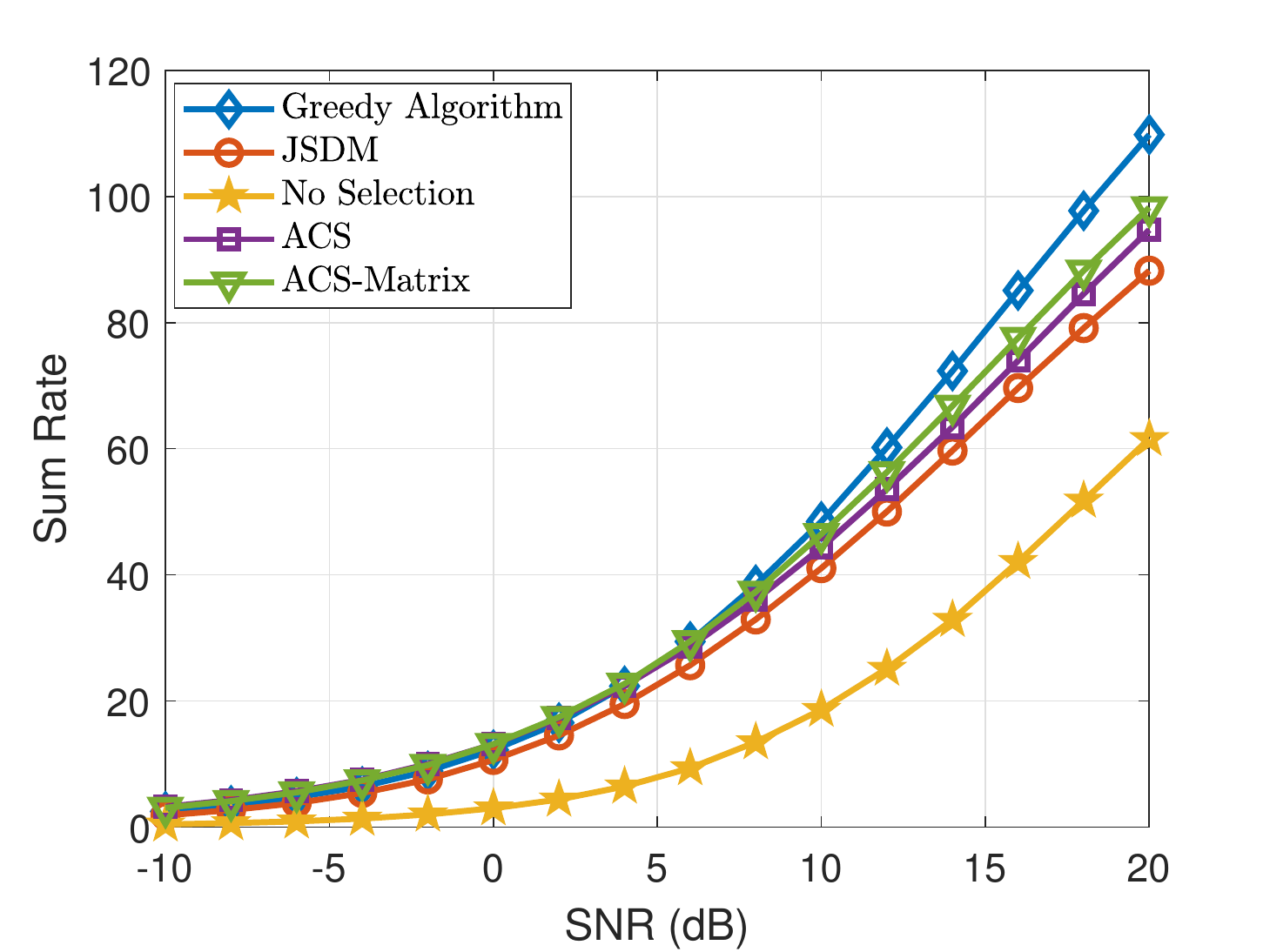}
\caption{Sum Rate versus SNR with  $4\times 8\times 2$ DP-UPA, $N_U=60$ users and $T=16$ timeslots.}
\label{SNR_64_60}
\end{minipage}
\quad
\begin{minipage}{0.48\linewidth}
\centering
\includegraphics[width= 3.2in,angle=0]{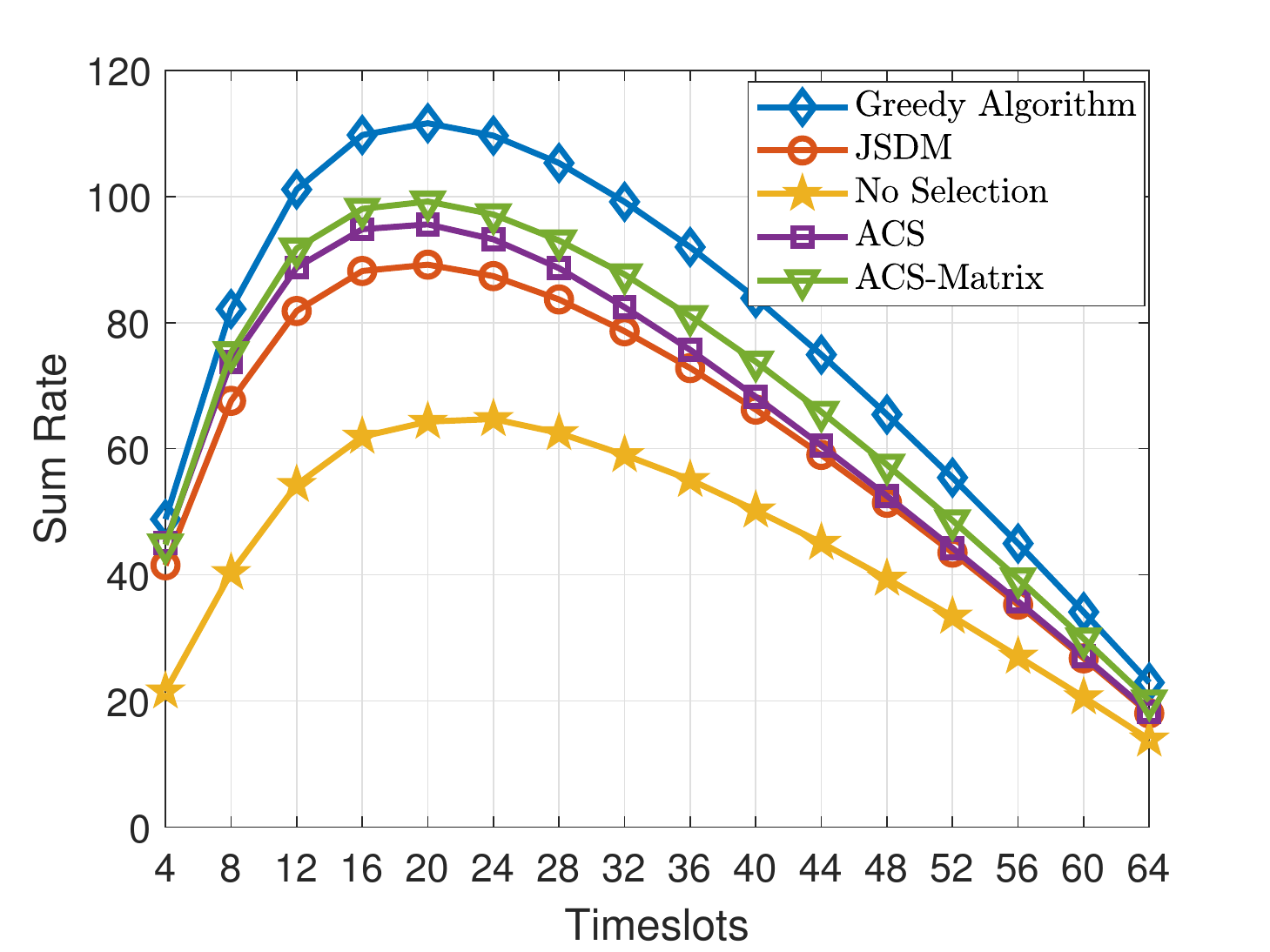}
\caption{Sum Rate versus Timeslots with $4\times 8\times 2$ DP-UPA, $N_U=60$ users and $\mathrm{SNR}=20$ dB.}
\label{T_64_60}
\end{minipage}
\end{figure}

Further, in Fig. \ref{SNR_64_30}-\ref{T_64_60}, we increase the number of antennas from 32 to 64, and consider two scenarios with $N_U=30$ and $N_U=60$ users. 
For the 64 DP-UPA antenna scenario with 30 users, compared with the 32-antenna case, the sum rate improvement of our greedy algorithm over other methods is diminishing, because the capability of serving more users is enhanced with more antennas, and thus user selection is not crucial.
When the number of users increases from $N_U=30$ to $N_U=60$, we observe the same phenomenon as that in Figures \ref{SNR_32_30} and \ref{T_32_30}. Interestingly, from Figures \ref{T_64_30} and \ref{T_64_60}, we find that even if the number of antennas is increased, the optimal pilot dimension is still around $T=16$ timeslots.
It is worth noting that, ACS-Matrix with matrix-weight graph representation outperforms the conventional scalar-weight ACS method. It suggests that the matrix-weight formulation is more suitable than scalar-weight ACS for the DP-UPA scenario. As such, the improvement of our proposed greedy algorithm comes from two aspects: the matrix-weight MILP formulation and the search-based user/beam selection strategy. 
\begin{figure}
\centering
\includegraphics[width= 3.2in,angle=0]{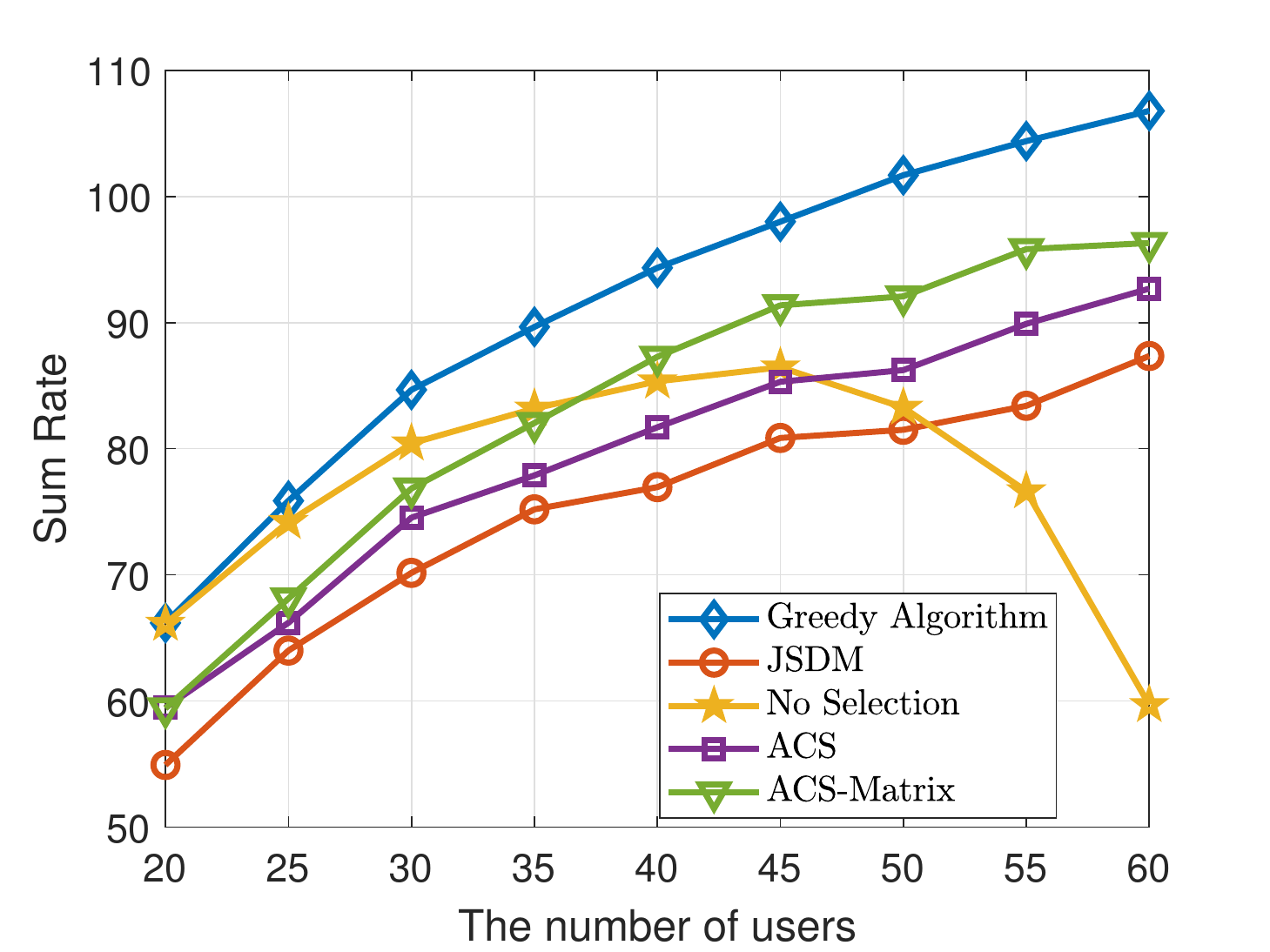}
\caption{Sum Rate versus the number of users with $4\times 8\times 2$ DP-UPA, $T=16$ timeslots and $\mathrm{SNR}=20$ dB.}
\label{User_64_20T}
\vspace{-20pt}
\end{figure}

 Fig. \ref{User_64_20T} shows the sum rate versus the number of users in the cell when the pilot dimension is set to $T=16$. We can observe that: (1) When the number of users is small, e.g., $N_U \le 30$, user selection is unnecessary, because the sum rates are nearly the same for the proposed greedy algorithm, compared with No Selection. (2) As the number of users increases, the benefit of user selection emerges, and it becomes crucial when the number of users is large, e.g., $N_U \ge 45$. (3) Our proposed MD-ACS with greedy algorithm always outperforms the conventional ACS thanks to the matrix-weight graph representation and the search-based greedy user/beam selection.

\section{Conclusion}
In this work, we have investigated downlink sparsifying precoder design and user selection in DP-UPA FDD massive MIMO systems using active channel sparsification (ACS). By extending the original scalar-weight bipartite graph representation of user-beam association
to a matrix-weight bipartite graph, we proposed a generalized multi-dimensional ACS (MD-ACS) for DP-UPA antenna configurations with a nonlinear integer program formulation. Inspired by the generalized multi-assignment problem, we proposed an efficient greedy algorithm to solve the nonlinear integer problem, and observed its superiority in extensive simulation results using QuaDriGa channel models. We believe such an improvement of the ACS methodology could pave the way for the potential deployment of ACS to the practical FDD massive MIMO systems.

 \appendix
\subsection{Proof of Lemma \ref{lemma:structure_cov}}
\label{proof:lemma-toeplitz}

Given the channel vector in \eqref{eq:channelV_channelH}, the covariance matrix $\Rm=\EE\{\hv\hv^\H\}$ can be written as 
\begin{align}
\Rm=
\begin{bmatrix}
\EE\{\hv_V\hv_V^\H\} & \EE\{\hv_V\hv_H^\H\}\\
\EE\{\hv_H\hv_V^\H\} &\EE\{\hv_H\hv_H^\H\}
\end{bmatrix}
\defeq
\begin{bmatrix}
\Rm_1 &\Rm_2\\
\Rm_2^\H &\Rm_3
\end{bmatrix}.
\end{align}
For $k=1,2,3$ we have
\begin{align}
\Rm_k=\int_\Omega {p_k\av(\theta,\phi)\av^{\H}(\theta,\phi)} d\theta d\phi
\end{align}
where $\Omega = \{(\theta, \phi): \theta\in \Ac,\phi\in \Bc\}$ and
\begin{align}
p_1=\gamma_V{\gamma}^*_V\EE\{\beta_V{\beta}^*_V\},\quad p_2=\gamma_V{\gamma}^*_H\EE\{\beta_V\bar{\beta}_H\}, \quad p_3 = \gamma_H{\gamma}^*_H\EE\{\beta_H{\beta}^*_H\}
\end{align}
with $\beta_V$ and $\beta_H$ being vertical and horizontal polarization respectively. In fact, the submatrix $\Rm_k$ has the same structure as the covariance matrix of UPA, which is a doubly Toeplitz matrix. 
By letting $\Pm_{\beta}=\left(\begin{smallmatrix}
p_1 & p_2^\herm\\
p_2 & p_3
\end{smallmatrix}\right)$, the covariance matrix $\Rm$ can be alternatively written as
\begin{align}
\Rm=
\Pm_{\beta}
\otimes \int_\Omega  {\av(\theta,\phi)\av^{\H}(\theta,\phi)} d\theta d\phi
=\Pm_{\beta} \otimes
\Bm
\end{align}
where
\begin{align}
\Bm &=\int_\Omega\av_y(\theta,\phi)\av_y^\H(\theta,\phi) \otimes \av_x(\theta,\phi) \av_x^\H(\theta,\phi) d\theta d\phi\\
&= \int_\Omega
\begin{bmatrix}
\Bm_{11} & \cdots & \Bm_{1M_y}\\
\vdots &\ddots & \vdots \\
\Bm_{M_y1} & \cdots & \Bm_{M_yM_y}\\
\end{bmatrix}
d\theta d\phi.  \label{eq:R-toeplitz}
\end{align}
For $p,q \in [M_y]$, each block $\Bm_{pq}$ can be written as 
\begin{align}
 \Bm_{pq}&= \left[\av_y(\theta,\phi)\av_y^\H(\theta,\phi)\right]_{pq} \Am(\theta,\phi) 
=e^{\jmath \frac{2\pi}{\lambda_w} d_y(p-q)\sin\phi \sin\theta}\Am(\theta,\phi) 
\end{align}
with
\begin{align}
[\Am(\theta,\phi)]_{ij}&=[\av_x(\theta,\phi)]_i[\av_x^\H(\theta,\phi)]_j
=e^{\jmath \frac{2\pi}{\lambda_w} d_x(i-j)\sin\phi\cos\theta}. 
\end{align}
It appears that the elements in $\Am(\theta,\phi)$ only depend on $(i-j)$ and the submatrices in $\Bm_{pq}$ only depend on $(p-q)$. Therefore, we conclude that $\Bm$ is a doubly Toeplitz matrix.

To facilitate the inspection from the perspective of generating function for Toeplitz matrices, we transform $\Rm$ into a doubly block Toeplitz matrix by row/column permutation. 
Following the footsteps in \cite{H.V}, we permutate $\Rm$ by a perfect shuffle matrix $\Qm$ as 
\begin{align}
\hat{\Rm} = \Qm \Rm \Qm^{\T} = \Qm \Pm_{\beta} \otimes \Bm \Qm = \Bm \otimes \Pm_{\beta}
\end{align}
with 
\begin{align}
\Qm=
\begin{bmatrix}
\IM_M\left (1:\frac{M}{2}:M,:\right )\\
\IM_M\left (2:\frac{M}{2}:M,:\right )\\
\vdots\\
\IM_M\left (\frac{M}{2}:\frac{M}{2}:M,:\right )\\
\end{bmatrix}.
\end{align}
By the permutation, $\hat{\Rm}$ is doubly Toeplitz matrix, that is,
 an $M_xM_y \times M_xM_y$ doubly block Toeplitz matrix, with each element being a $2\times 2$ matrix. In particular, the $(m_1,m_2)$-th submatrix $[\hat{\Rm}]_{m_1,m_2}$ can be given by
\begin{align}\label{eq:R_m1,m2}
[\hat{\Rm}]_{m_1,m_2} = \int_\Omega e^{\jmath \frac{2\pi}{\lambda_w} (d_ym_1\sin\phi \sin\theta+ d_xm_2\sin\phi\cos\theta)} d \theta d \phi \Pm_{\beta}.
\end{align}
When $M_x, M_y \to \infty$, it is known in \cite{B.T} that the $(m_1,m_2)$-th submatrix of $\hat{\Rm}$ can be given by
\begin{align}
[\hat{\Rm}]_{m_1,m_2} = \int_{-1/2}^{1/2} \int_{-1/2}^{1/2} \Sigmam(\omega_1,\omega_2) e^{-\jmath 2\pi (m_1\omega_1+m_2\omega_2)} d \omega_1 d \omega_2
\end{align}
through its generating function
\begin{align}
&\Sigmam(\omega_1,\omega_2)=\sum_{m_1=-\infty}^{\infty}\sum_{m_2=-\infty}^{\infty}[\hat{\Rm}]_{m_1,m_2}e^{\jmath 2\pi (m_1\omega_1+m_2\omega_2)}.
\end{align}

It is known that for any Toeplitz matrix $\Tm_n$, when $n \to \infty$, there exists a circulant matrix $\Cm_n$ sharing the same generating function \cite{T.M}. This applies to the extensions, e.g., doubly (block) Toeplitz and circulant matrices. It is known in \cite{B.T} that circulant matrix can be diagonalized by DFT matrix, and this can be extended to block and doubly block Toeplitz matrices. As such, for the doubly block Toeplitz matrix $\hat{\Rm}$, there exists a doubly block circulant matrix $\hat{\Cm}$ such that
\begin{align}
\hat{\Cm}=(\Fm_{M_x}\otimes \Fm_{M_y}\otimes \IM_2) \Sigmam (\Fm_{M_x}\otimes \Fm_{M_y}\otimes \IM_2)^\H
\end{align}
where $\Sigmam$ is a block diagonal matrix with $M_xM_y$ non-zero diagonal blocks of size $2 \times 2$ each. According to \cite[Theorem 2]{yi2020asymptotic}. the diagonal blocks of $\Sigmam$ is the uniform sampling of the generating function $\Sigmam(\omega_1,\omega_2)$ on the following grids
\begin{align}
(\omega_1,\omega_2) = \left(-\frac{1}{2}+\frac{ m_1}{M_y},-\frac{1}{2}+\frac{m_2}{M_x}\right), \quad \forall m_1 \in [M_y]-1, \; m_2 \in [M_x]-1.
\end{align}

 \subsection{Proof of Theorem \ref{theorem:sparsity}}\label{proof:theorem-sparsity}
According to Lemma 1, by letting $d_x=d_y=d$ and plugging \eqref{eq:R_m1,m2} into the spectral density function $\Sigmam(\omega_1,\omega_2)$, we have
\begin{align}
\Sigmam(\omega_1,\omega_2)&=\sum_{m_1=-\infty}^{\infty}\sum_{m_2=-\infty}^{\infty}[\hat{\Rm}]_{m_1,m_2}e^{\jmath2\pi (m_1\omega_1+m_2\omega_2)}\\
&=\Pm_{\beta}
\int_\Omega \sum_{m_1=-\infty}^{\infty}\sum_{m_2=-\infty}^{\infty}e^{\jmath 2\pi m_1(\frac{d}{\lambda_w}\sin\phi\sin\theta+\omega_1)}e^{\jmath 2\pi m_2 (\frac{d}{\lambda_w}\sin\phi\cos\theta+\omega_2)}d\theta d\phi\\
&=\Pm_{\beta}
\int_\Omega \left(\sum_{m_1=-\infty}^{\infty}e^{\jmath 2\pi m_1(\frac{d}{\lambda_w}\sin\phi\sin\theta+\omega_1)}\right) \left(\sum_{m_2=-\infty}^{\infty}e^{\jmath 2\pi m_2 (\frac{d}{\lambda_w}\sin\phi\cos\theta+\omega_2)}\right) d\theta d\phi\\
&=\Pm_{\beta}
\int_\Omega \left(\sum_{m_1=-\infty}^{\infty}\delta\left(m_1-\left(\frac{d}{\lambda_w}\sin\phi\sin \theta+\omega_1\right)\right) \right) \notag \\
& \hspace{4.5cm} \cdot \left(\sum_{m_2=-\infty}^{\infty}\delta\left(m_2-\left(\frac{d}{\lambda_w}\sin\phi\cos\theta+\omega_2\right)\right) \right)d\theta d\phi \label{eq:spectral-func-integral}
\end{align}
where the last equation is due to Poisson Summation  Formula \cite{lapidoth2017foundation}. 

Further, let $z_1=\sin\phi\sin\theta$ and $z_2=\sin\phi\cos \theta $. Define $z_i^{\max}=\max_{\phi, \theta}\{z_i\}$ and $z_i^{\min}=\min_{\phi, \theta}\{z_i\}$. Due to the property of delta function, only if we have both $\omega_1=m_1-\frac{d}{\lambda_w}z_1$ and $\omega_2=m_2-\frac{d}{\lambda_w}z_2$, $\Sigmam(\omega_1,\omega_2)$ is a non-zero matrix. Given that $m_1,m_2\in\ZZ$, $-1 \le z_i^{\min} \le z_i^{\max} \le 1$, and $\omega_1, \omega_2 \in \left(-\frac{1}{2},\frac{1}{2}\right)$, the only possible integer of $m_1$ and $m_2$ is 0. Thus, the range of $\omega_i$ that yields non-zero $\Sigmam(\omega_1,\omega_2)$ depends on that of $z_i^{\min}$ and  $z_i^{\max}$, i.e. $\omega_i\in\left[-\frac{d}{\lambda_w}z_i^{\max},\frac{d}{\lambda_w}z_i^{\min}\right], i=\{1,2\}$. 
As such, given a set of AOA $\theta_c, \phi_c$ and AS $\Delta_1, \Delta_2$, we are able to obtain a compact support that is related to the both elevation and azimuth AOAs. 
Even when the special points, $\omega_1, \omega_2 = \pm \frac{1}{2}$, are considered, such that $m_1,m_2=\pm 1$ might exist, we only have the corresponding points under $z_i=\pm1$ that does not alter the conclusion.



\begin{thebibliography}{10}
\providecommand{\url}[1]{#1}
\csname url@samestyle\endcsname
\providecommand{\newblock}{\relax}
\providecommand{\bibinfo}[2]{#2}
\providecommand{\BIBentrySTDinterwordspacing}{\spaceskip=0pt\relax}
\providecommand{\BIBentryALTinterwordstretchfactor}{4}
\providecommand{\BIBentryALTinterwordspacing}{\spaceskip=\fontdimen2\font plus
\BIBentryALTinterwordstretchfactor\fontdimen3\font minus
  \fontdimen4\font\relax}
\providecommand{\BIBforeignlanguage}[2]{{%
\expandafter\ifx\csname l@#1\endcsname\relax
\typeout{** WARNING: IEEEtran.bst: No hyphenation pattern has been}%
\typeout{** loaded for the language `#1'. Using the pattern for}%
\typeout{** the default language instead.}%
\else
\language=\csname l@#1\endcsname
\fi
#2}}
\providecommand{\BIBdecl}{\relax}
\BIBdecl

\bibitem{larsson2014massive}
E.~G. Larsson, O.~Edfors, F.~Tufvesson, and T.~L. Marzetta, ``Massive {MIMO}
  for next generation wireless systems,'' \emph{IEEE Communications Magazine},
  vol.~52, no.~2, pp. 186--195, 2014.

\bibitem{bjornson2017massive}
E.~Bj{\"o}rnson, J.~Hoydis, and L.~Sanguinetti, ``Massive {MIMO} networks:
  Spectral, energy, and hardware efficiency,'' \emph{Foundations and Trends in
  Signal Processing}, vol.~11, no. 3-4, pp. 154--655, 2017.

\bibitem{JSDM}
A.~{Adhikary}, J.~{Nam}, J.~{Ahn}, and G.~{Caire}, ``Joint spatial division and
  multiplexing—the large-scale array regime,'' \emph{IEEE Transactions on
  Information Theory}, vol.~59, no.~10, pp. 6441--6463, 2013.

\bibitem{JSDM-UG}
J.~{Nam}, A.~{Adhikary}, J.~{Ahn}, and G.~{Caire}, ``Joint spatial division and
  multiplexing: Opportunistic beamforming, user grouping and simplified
  downlink scheduling,'' \emph{IEEE J. Sel. Topics in Signal Processing},
  vol.~8, no.~5, pp. 876--890, 2014.

\bibitem{rao2014distributed}
X.~Rao and V.~K. Lau, ``Distributed compressive {CSIT} estimation and feedback
  for {FDD} multi-user massive {MIMO} systems,'' \emph{IEEE Transactions on
  Signal Processing}, vol.~62, no.~12, pp. 3261--3271, 2014.

\bibitem{gao2015spatially}
Z.~Gao, L.~Dai, Z.~Wang, and S.~Chen, ``Spatially common sparsity based
  adaptive channel estimation and feedback for {FDD} massive {MIMO},''
  \emph{IEEE Transactions on Signal Processing}, vol.~63, no.~23, pp.
  6169--6183, 2015.

\bibitem{ding2018dictionary}
Y.~Ding and B.~D. Rao, ``Dictionary learning-based sparse channel
  representation and estimation for {FDD} massive {MIMO} systems,'' \emph{IEEE
  Transactions on Wireless Communications}, vol.~17, no.~8, pp. 5437--5451,
  2018.

\bibitem{xie2018channel}
H.~{Xie}, F.~{Gao}, S.~{Jin}, J.~{Fang}, and Y.~C. {Liang}, ``Channel
  estimation for {TDD/FDD} massive {MIMO} systems with channel covariance
  computing,'' \emph{IEEE Transactions on Wireless Communications}, vol.~17,
  no.~6, pp. 4206--4218, 2018.

\bibitem{miretti2018fddmassive}
L.~{Miretti}, R.~L.~G. {Cavalcante}, and S.~{Stanczak}, ``{FDD} massive {MIMO}
  channel spatial covariance conversion using projection methods,'' in
  \emph{2018 IEEE International Conference on Acoustics, Speech and Signal
  Processing (ICASSP)}, 2018.

\bibitem{haghighatshoar2018multi}
S.~{Haghighatshoar}, M.~B. {Khalilsarai}, and G.~{Caire}, ``Multi-band
  covariance interpolation with applications in massive {MIMO},'' in \emph{2018
  IEEE International Symposium on Information Theory (ISIT)}, 2018, pp.
  386--390.

\bibitem{wen2018deep}
C.-K. Wen, W.-T. Shih, and S.~Jin, ``Deep learning for massive {MIMO} {CSI}
  feedback,'' \emph{IEEE Wireless Communications Letters}, vol.~7, no.~5, pp.
  748--751, 2018.

\bibitem{wang2018deep}
T.~Wang, C.-K. Wen, S.~Jin, and G.~Y. Li, ``Deep learning-based {CSI} feedback
  approach for time-varying massive {MIMO} channels,'' \emph{IEEE Wireless
  Communications Letters}, vol.~8, no.~2, pp. 416--419, 2018.

\bibitem{jang2019deep}
Y.~{Jang}, G.~{Kong}, M.~{Jung}, S.~{Choi}, and I.~{Kim}, ``Deep autoencoder
  based {CSI} feedback with feedback errors and feedback delay in {FDD} massive
  {MIMO} systems,'' \emph{IEEE Wireless Communications Letters}, vol.~8, no.~3,
  pp. 833--836, 2019.

\bibitem{arnold2019enabling}
M.~Arnold, S.~D{\"o}rner, S.~Cammerer, S.~Yan, J.~Hoydis, and S.~t. Brink,
  ``Enabling {FDD} massive {MIMO} through deep learning-based channel
  prediction,'' \emph{arXiv preprint arXiv:1901.03664}, 2019.

\bibitem{B.K}
M.~B. Khalilsarai, S.~Haghighatshoar, X.~Yi, and G.~Caire, ``{FDD} massive
  {MIMO} via {UL/DL} channel covariance extrapolation and active channel
  sparsification,'' \emph{IEEE Transactions on Wireless Communications},
  vol.~18, no.~1, pp. 121--135, 2018.

\bibitem{khalilsarai2020dual}
M.~B. Khalilsarai, T.~Yang, S.~Haghighatshoar, X.~Yi, and G.~Caire,
  ``Dual-polarized {FDD} massive {MIMO}: A comprehensive framework,''
  \emph{arXiv preprint arXiv:2008.11182}, 2020.

\bibitem{khalilsarai2020structured}
M.~B. {Khalilsarai}, T.~{Yang}, S.~{Haghighatshoar}, and G.~{Caire},
  ``Structured channel covariance estimation from limited samples in massive
  {MIMO},'' in \emph{ICC 2020 - 2020 IEEE International Conference on
  Communications (ICC)}, 2020, pp. 1--7.

\bibitem{liu2020statistical}
H.~{Liu}, X.~{Yuan}, and Y.~J. {Zhang}, ``Statistical beamforming for {FDD}
  downlink massive {MIMO} via spatial information extraction and beam
  selection,'' \emph{IEEE Transactions on Wireless Communications}, vol.~19,
  no.~7, pp. 4617--4631, 2020.

\bibitem{B.T}
J.~Guti{\'e}rrez-Guti{\'e}rrez, P.~M. Crespo \emph{et~al.}, ``Block {T}oeplitz
  matrices: Asymptotic results and applications,'' \emph{Foundations and
  Trends{\textregistered} in Communications and Information Theory}, vol.~8,
  no.~3, pp. 179--257, 2012.

\bibitem{3GPP}
3GPP, ``Study on {3D} channel model for {LTE},'' \emph{Tech. Rep. 3GPP
  36.873(V12.7.0)}, 2018.

\bibitem{C.Q}
C.~{Qian}, X.~{Fu}, N.~D. {Sidiropoulos}, and Y.~{Yang}, ``Tensor-based channel
  estimation for dual-polarized massive {MIMO} systems,'' \emph{IEEE
  Transactions on Signal Processing}, vol.~66, no.~24, pp. 6390--6403, 2018.

\bibitem{L.M}
L.~{Miretti}, R.~L.~G. {Cavalcante}, and S.~{Stańczak}, ``Downlink channel
  spatial covariance estimation in realistic {FDD} massive {MIMO} systems,'' in
  \emph{2018 IEEE Global Conference on Signal and Information Processing
  (GlobalSIP)}, 2018, pp. 161--165.

\bibitem{lu2020omnidirectional}
A.~{Lu}, X.~{Gao}, X.~{Meng}, and X.~{Xia}, ``Omnidirectional precoding for
  {3D} massive {MIMO} with uniform planar arrays,'' \emph{IEEE Transactions on
  Wireless Communications}, vol.~19, no.~4, pp. 2628--2642, 2020.

\bibitem{love2003equal}
D.~J. {Love} and R.~W. {Heath}, ``Equal gain transmission in multiple-input
  multiple-output wireless systems,'' \emph{IEEE Transactions on
  Communications}, vol.~51, no.~7, pp. 1102--1110, 2003.

\bibitem{Yu:ACS}
H.~Yu, L.~You, W.~Wang, and X.~Yi, ``Active channel sparsification for uplink
  massive {MIMO} with uniform planar array,'' \emph{to appear in IEEE
  Transactions on Wireless Communications}, 2021.

\bibitem{gray2006toeplitz}
R.~M. Gray, \emph{Toeplitz and circulant matrices: A review}.\hskip 1em plus
  0.5em minus 0.4em\relax now publishers inc, 2006.

\bibitem{pa1996atheorem}
P.~A. {Voois}, ``A theorem on the asymptotic eigenvalue distribution of
  {T}oeplitz-block-{T}oeplitz matrices,'' \emph{IEEE Transactions on Signal
  Processing}, vol.~44, no.~7, pp. 1837--1841, 1996.

\bibitem{oudin2008asymptotic}
M.~{Oudin} and J.~P. {Delmas}, ``Asymptotic generalized eigenvalue distribution
  of {T}oeplitz block {T}oeplitz matrices,'' in \emph{2008 IEEE International
  Conference on Acoustics, Speech and Signal Processing}, 2008, pp. 3309--3312.

\bibitem{X.R}
X.~{Rao} and V.~K.~N. {Lau}, ``Distributed compressive {CSIT} estimation and
  feedback for {FDD} multi-user massive {MIMO} systems,'' \emph{IEEE
  Transactions on Signal Processing}, vol.~62, no.~12, pp. 3261--3271, 2014.

\bibitem{Haifan}
H.~Yin, D.~Gesbert, M.~Filippou, and Y.~Liu, ``A coordinated approach to
  channel estimation in large-scale multiple-antenna systems,'' \emph{IEEE
  Journal on Selected Areas in Communications}, vol.~31, no.~2, pp. 264--273,
  2013.

\bibitem{you2015pilot}
L.~You, X.~Gao, X.-G. Xia, N.~Ma, and Y.~Peng, ``Pilot reuse for massive {MIMO}
  transmission over spatially correlated {R}ayleigh fading channels,''
  \emph{IEEE Transactions on Wireless Communications}, vol.~14, no.~6, pp.
  3352--3366, 2015.

\bibitem{Z.C}
Z.~{Chen} and C.~{Yang}, ``Pilot decontamination in wideband massive {MIMO}
  systems by exploiting channel sparsity,'' \emph{IEEE Transactions on Wireless
  Communications}, vol.~15, no.~7, pp. 5087--5100, 2016.

\bibitem{jaeckel2014quadriga}
S.~{Jaeckel}, L.~{Raschkowski}, K.~{Börner}, and L.~{Thiele}, ``{QuaDRiGa}: A
  {3-D} multi-cell channel model with time evolution for enabling virtual field
  trials,'' \emph{IEEE Transactions on Antennas and Propagation}, vol.~62,
  no.~6, pp. 3242--3256, 2014.

\bibitem{T.M}
R.~M. Gray \emph{et~al.}, ``Toeplitz and circulant matrices: A review,''
  \emph{Foundations and Trends{\textregistered} in Communications and
  Information Theory}, vol.~2, no.~3, pp. 155--239, 2006.

\bibitem{park1998lagrangian}
J.~S. Park, B.~H. Lim, and Y.~Lee, ``A {L}agrangian dual-based branch-and-bound
  algorithm for the generalized multi-assignment problem,'' \emph{Management
  Science}, vol.~44, no. 12-part-2, pp. S271--S282, 1998.

\bibitem{QuadrigaD}
``{QuaDRiGa} documentation.'' \emph{https://quadriga-channel-model.de}.

\bibitem{H.V}
H.~V. Henderson and S.~R. Searle, ``The vec-permutation matrix, the vec
  operator and {K}ronecker products: a review,'' \emph{Linear and Multilinear
  Algebra}, vol.~9, no.~4, pp. 271--288, 1981.

\bibitem{yi2020asymptotic}
X.~Yi, ``Asymptotic singular value distribution of linear convolutional
  layers,'' \emph{arXiv preprint arXiv:2006.07117}, 2020.

\bibitem{lapidoth2017foundation}
A.~Lapidoth, \emph{A foundation in digital communication}.\hskip 1em plus 0.5em
  minus 0.4em\relax Cambridge University Press, 2017.

\end{thebibliography}
\end{document}